\documentclass[conference]{IEEEtran}

\pagestyle{plain}

\usepackage{cite}
\usepackage{amsmath,amsthm,amssymb,amsfonts}
\usepackage{graphicx,wrapfig,lipsum}
\usepackage{siunitx,booktabs}
\usepackage{hyperref}
\usepackage{xspace}
\usepackage{textcomp}
\usepackage{xcolor}
\usepackage{paralist}
\usepackage{algorithmic}
\usepackage[ruled,vlined]{algorithm2e}
\usepackage{pifont,bm}
\usepackage{multirow,footnote}
\makesavenoteenv{table}
\usepackage{thmtools,thm-restate} 
\usepackage{cleveref}[2012/02/15]

\newcommand{\cmark}{\ding{51}}%
\newcommand{\xmark}{\ding{55}}%

\def\BibTeX{{\rm B\kern-.05em{\sc i\kern-.025em b}\kern-.08em
    T\kern-.1667em\lower.7ex\hbox{E}\kern-.125emX}}

\newtheorem{theorem}{Theorem}

\newtheorem{definition}{Definition}

\makeatletter
\newcommand\footnoteref[1]{\protected@xdef\@thefnmark{\ref{#1}}\@footnotemark}
\makeatother

\usepackage{footnote}
\makesavenoteenv{tabular}
\makesavenoteenv{table*}

\usepackage{caption}
\usepackage{tabularx}

\newenvironment{protocol}[1][tb]
  {
   \begin{algorithm}[#1]%
  }{\end{algorithm}}
  

\clubpenalty=100000      
\widowpenalty=100000     

\long\def\com#1{}


\newcommand{\eg}{{e.g.}}
\newcommand{\ie}{{i.e.}}

\newcommand{\sys}{\textsc{Brick}\xspace}
\newcommand{\guard}{{Warden}\xspace}
\newcommand{\guards}{{Wardens}\xspace}
\newcommand{\asys}{\textsc{Brick+}\xspace}

\hyphenation{op-tical net-works semi-conduc-tor}

\begin{document}
%
\title{\sys: Asynchronous Payment Channels}

\author{\IEEEauthorblockN{Zeta Avarikioti}
\IEEEauthorblockA{ETH Zurich}
\and
\IEEEauthorblockN{Eleftherios Kokoris-Kogias}
\IEEEauthorblockA{EPFL}
\and
\IEEEauthorblockN{Roger Wattenhofer}
\IEEEauthorblockA{ETH Zurich}
\and
\IEEEauthorblockN{Dionysis Zindros}
\IEEEauthorblockA{NKUA, IOHK}}

\com{
\IEEEoverridecommandlockouts
\makeatletter\def\@IEEEpubidpullup{6.5\baselineskip}\makeatother
\IEEEpubid{\parbox{\columnwidth}{
    Network and Distributed Systems Security (NDSS) Symposium 2021\\
    23-26 February 2020, San Diego, CA, USA\\
    ISBN 1-891562-61-4\\
    https://dx.doi.org/10.14722/ndss.2021.23xxx\\
    www.ndss-symposium.org
}
\hspace{\columnsep}\makebox[\columnwidth]{}}
}

\maketitle

\begin{abstract}
Off-chain protocols (channels) are a promising solution to the scalability and privacy challenges of blockchain payments. 
Current proposals, however,  require synchrony assumptions to preserve the safety of a channel, leaking to an adversary the exact amount of time needed to control the network for a successful attack. 
In this paper, we introduce \sys, the first payment channel that remains secure under network asynchrony and concurrently provides correct incentives.
The core idea is to incorporate the conflict resolution process within the channel by introducing a rational committee of external parties, called \guards. Hence, if a party wants to close a channel unilaterally, it can only get the committee's approval for the last valid state. 
\sys provides sub-second latency because it does not employ heavy-weight consensus. Instead, \sys uses \textit{consistent broadcast} to announce updates and close the channel, a light-weight abstraction that is powerful enough to preserve safety and liveness to any rational parties.
Furthermore, we consider permissioned blockchains, where the additional property of auditability might be desired for regulatory purposes. We introduce \asys, an off-chain construction that provides auditability on top of \sys without conflicting with its privacy guarantees. 
We formally define the properties our payment channel construction should fulfill, and prove that both \sys and \asys satisfy them. We also design incentives for \sys such that honest and rational behavior aligns. Finally, we provide a reference implementation of the smart contracts in Solidity.
\end{abstract}


\section{Introduction}
\label{sec:intro}

The prime solution to the scalability challenge~\cite{croman2016scaling} of large-scale blockchains, are the so-called \textit{channels}~\cite{spilman2013channels,DW2015channels,poon2015lightning}.
The idea is that any two parties that interact (often) with each other can set up a joint account on the blockchain, \ie, a channel. Using this channel, the two parties can transact off-chain, sending money back and forth by just sending each other signed messages. The two parties are relying on the blockchain as a fail-safe mechanism in case of disputes. \com{Channels can also be chained to allow for whole payment networks by applying so-called hash time lock contracts}

The security guarantees of a channel are ensured by a dispute handling mechanism. If one party tries to cheat the other party, in particular by trying to close a channel on the underlying blockchain in an invalid (outdated) state, then the attacked party has a window of time ($t_d$) to challenge the fraud attempt. 
Hence, a channel is secure as long as all parties of the channel are frequently -- at least once in $t_d$ time -- online and monitoring the blockchain.
This is problematic in real networks~\cite{miller2013feather}, as one party may simply execute a denial of service (DoS) attack on the other party. To add insult to injury, the parameter $t_d$ is public; the attacking party hence knows the exact duration of the denial of service attack.
\com{It has been shown that these attacks are feasible~\cite{miller2013feather}.} \com{With the proliferation of payment networks such as Lightning on top of Bitcoin system, these attacks have become relevant.}


The issue is well-known in the community, and there were solution attempts using semi-trusted third parties called \textit{watchtowers}~\cite{dryja2016monitors,watchtowers,avarikioti2018towards,mccorry2019pisa}.
The idea is that worrisome channel parties can hire watchtowers which watch the blockchain on their behalf in case they were being attacked. So instead of DoSing a single machine of the channel partner, the attacker might need to DoS the channel partner as well as its watchtower(s). This certainly needs more effort as  the adversary must detect a watchtower reacting and then block the dispute from appearing on-chain. However, if large amounts of money are in a channel, it will easily be worth the investment.

While DoS attacks are also possible in blockchains such as the Bitcoin blockchain, DoS attacks on channels have a substantially different threat level. A DoS attack on a blockchain is merely a liveness attack: One may prevent a transaction from entering the blockchain at the time of the attack. However, the parties involved with the transaction will notice this, and can simply re-issue their transaction later. A DoS attack on a channel, on the other hand, will steal all the funds that were in the channel. Once the fraudulent transaction is in the blockchain, uncontested for $t_d$ time, the attack succeeds, and nobody but the cheated party (and its watchtowers) will know any better.

Channels need a more fundamental solution. Not unlike blockchains, introducing timing parameters may be okay for liveness. Security on the other hand should be guaranteed even if the network behaves completely asynchronously. 




To address these issues we introduce \sys, a novel incentive-compatible payment channel construction that does not rely on any timing assumption for the delivery of messages to be secure.
\sys provides proactive security, detecting and preventing fraud before it appears on-chain. 
As a result, \sys can guarantee the security of channels even under censorship\footnote{This censoring ability is encompassed by the chain-quality property~\cite{garay15bitcoin} of blockchain systems which is rightly bound to the  synchrony of the network.}~\cite{miller2013feather} or any liveness\footnote{Liveness states that a transaction will be eventually included on-chain as long as it is provided to all honest parties for a long enough time period~\cite{garay15bitcoin}.} attack (see Appendix~\ref{app:attack}).

To achieve these properties, \sys needs to address three key challenges.
The first challenge is how to achieve this proactive check without using a single trusted third party that approves every transaction.
The core idea of \sys is to provide proactive security to the channel instead of reactive dispute resolution.
To this end, \sys employs a group of \guards.
If there is a dispute, the \guards  make sure the correct state is the only one available for submission on-chain, regardless of the amount of time it takes to make this final state visible. 

The second challenge for \sys is cost. To simulate this trusted third party, it would need the \guards to run costly asynchronous consensus~\cite{kokoris19bootstrapping} for every update. Instead, in \sys we show that a light-weight consistent broadcast protocol is enough to preserve both safety and liveness.

A final challenge of \sys that we address are incentives. While the \guards may be partially Byzantine, we additionally want honest behavior to be their dominant strategy.
Unfortunately, existing watchtower solutions do not align the expected and rational behavior of the watchtower, hence a watchtower is reduced to a trusted third party.
Specifically, Monitors~\cite{dryja2016monitors}, Watchtowers~\cite{watchtowers}, and DCWC~\cite{avarikioti2018towards} pay the watchtower upon fraud. 
Given that the use of a watchtower is public knowledge, any rational channel party will not commit fraud and hence the watchtower will never be paid. Therefore, there is no actual incentive for a third party to offer a watchtower service. 
On the other hand, Pisa \cite{mccorry2019pisa} pays the watchtower regularly every time a transaction is executed on the channel. The watchtower also locks collateral on the blockchain in case it misbehaves. However, Pisa's collateral is not linked to the channel or the party that employed the watchtower. Hence, a watchtower that is contracted by more than one channel can double-assign the collateral, making Pisa vulnerable to bribing attacks.
Even if the incentives of Pisa get fixed, punishing a misbehaving watchtower in Pisa is still a synchronous protocol (though for a longer period).
In \sys, we employ both rewards and punishment to design the appropriate incentives such that honest and rational behavior of \guards align, while no synchrony assumptions are required, \ie, the punishment of misbehaving \guards is not conditional on timing assumptions.

We also present \asys, a channel construction suitable for regulated environments such as supporting a fiat currency~\cite{wust2019prcash}.
\asys needs to additionally provide auditability of transactions without forfeiting privacy~\cite{gudgeon2019sok} and while preserving the accountability of the auditor.
To resolve this we change \sys in two ways. First, the state updates are no longer just private but also interconnected in a tamper-evident hash-chain.
Essentially, the \guards maintain a  single hash (constant-size storage cost) which is the head of the hash-chain of the state history.
Second, to provide accountability for the auditor, 
we require that the auditor posts the access request on-chain~\cite{kokoris18hidden}. Only then will the \guards provide the auditor the necessary metadata to verify the state history received from the parties of the channel.

To evaluate our channel construction we deploy our protocol on a large-scale testbed and show that
the overhead of an update is around the round-trip latency of the network (in our case $0.1$ seconds). 
Unlike existing channels, the parties in \sys need not wait for the dispute transaction to appear on-chain. Hence, our dispute resolution mechanism is three orders of magnitude faster than existing blockchain systems that need to wait until the transaction is finalized on-chain.
We additionally implement the on-chain operations of \sys in a Solidity smart contract that can be deployed on the Ethereum blockchain. We provide gas measurements for typical operations on the smart contract illustrating that it is practical. Our smart contract implementation is well tested and can be adopted towards a real deployment of \sys.
\com{The evaluation can be found in Section~\ref{sec:evaluation}.}

In summary, this paper makes the following contributions:
\begin{itemize}
    \item We introduce \sys, the first incentive-compatible off-chain construction that operates securely with offline channel participants under full asynchrony with sub-second latency. 
    \item We introduce \asys, a modified \sys channel construction, that enables external auditors to lawfully request access to the channels history while maintaining privacy.
    \item We define the desired channel properties and show they hold for \sys under a hybrid model of rational and byzantine participants (channel parties and \guards). Specifically, we present elaborate incentive mechanisms (rewards and punishments) for the \guards to maintain the channel properties under collusion or bribing.
    \item We evaluate the practicality of \sys by fully implementing its on-chain functionality in Solidity for the Ethereum blockchain. We measure its operational costs in terms of gas and illustrate that its deployment is practical.
\end{itemize}

\section{Background and Preliminaries}
In this section, we provide the necessary introduction to payment channels.
Furthermore, we introduce the necessary distributed abstractions 
that our constructions build upon.

\subsection{Blockchain Scalability \& Layer 2}
One of the major problems of blockchain protocols is the limited transaction throughput that is associated with the underlying consensus mechanism, originally introduced in Bitcoin~\cite{nakamoto2008bitcoin}. Nakamoto consensus demands that every participant of the system verifies and stores a replica of the entire history of transactions, \ie, the blockchain, to guarantee the safety and liveness of the transaction ledger. However, this requirement leads to blockchain systems with limited block size and block creation time. If we increase size or decrease time, we implicitly enforce participants to verify and store more data, which in turn leads to centralization and additional advantages for participants that invest in more infrastructure. Thus, blockchain systems that use the Nakamoto or similar consensus mechanisms face a scalability problem.   
Notably, Bitcoin handles at most seven transactions per second~\cite{croman2016scaling}, while current digital monetary systems, such as Visa, handle tens of thousands.

Proposed solutions for the throughput limitation of blockchain systems can be categorized in two groups. On-chain solutions that attempt to create faster blockchain protocols~\cite{kokoris17omniledger,kokoris16enhancing,decker16bitcoin}, and off-chain solutions that use the blockchain only as a fail-safe mechanism and move the transaction load offline, where the bottleneck is the network speed. While on-chain solutions lead to the design of new promising blockchain systems, they typically require stronger trust assumptions and they are not applicable to existing blockchain systems (without a hard fork). In contrast, off-chain (layer 2) solutions are built on top of the consensus layer of the blockchain and operate independently. Essentially, off-chain solutions allow two parties\footnote{Note that a channel can also be created between multiple parties~\cite{burchert2018scalable}.} to create a ``channel'' on the blockchain through which they can transact fast and secure; this solution is known as \emph{payment channel}~\cite{spilman2013channels,DW2015channels}.   

Payment channels allow transactions between two parties to be executed instantly off-chain while maintaining the guarantees of the blockchain. Essentially, the underlying blockchain acts as a ``judge'' in case of fraud.
There are multiple proposals on how to construct payment channels \cite{poon2015lightning,DW2015channels,deckereltoo,spilman2013channels}, but all proposals share the same core idea: The two parties create a joint account on the blockchain (funding transaction). Every time the parties want to make a transaction they update the distribution of the capital between them accordingly and they both sign the new  transaction as if it would be published on the blockchain (update transaction)\footnote{By definition, a channel requires the participation and signature of every party of the channel to update the channel's state in order to maintain security.}. To close the channel, a party publishes the latest update transaction either unilaterally or along with the counter-party with a closing transaction. 

The various proposals differ in the way they handle disputes, \ie, the case where one of the parties misbehaves and attempts to close the channel with a transaction that is not the latest update transaction, thus violating the safety property. 
Lightning channels~\cite{poon2015lightning} penalize the misbehaving party by assigning the money of the channel to the counter-party in case of fraud. To achieve this, every time an update transaction is signed each party releases a secret to the counterparty. That secret enables the counterparty to claim the money of the channel in case the party publishes the previous update transaction (breach remedy). However, this transaction is valid only for a window of time since the party should, in case of no fraud, eventually be able to spend its money from the channel. This dispute period is enforced with a (relative) timelock. 
On the other hand, Duplex channels~\cite{DW2015channels} guarantee that the latest update transaction will become valid before any previous update transaction, again utilizing timelocks.
In both cases, the liveness of the underlying blockchain and timelocks are crucial to the safety of the payment channel solution. Additionally, both solutions require online participants that frequently monitor the blockchain to ensure safety. 

\com{
To extend the concept of payment channels on an arbitrary state, \textit{state channels} were introduced~\cite{Miller2017sprites}. Several recent constructions exist in this direction~\cite{dziembowski2017perun, Miller2017sprites,coleman2018counterfactual}. In contrast with these works, we focus on designing a safe state channel protocol that is asynchronous, \ie, does not require timelocks, which can be used later as a \emph{brick} in any of these constructions.
}
\subsection{Consistent Broadcast}

Consistent broadcast~\cite{reiter94secure} is a distributed protocol run by a node that wants to send a message to a set of peers reliably.
It is called consistent because it guarantees that if a correct peer delivers
a message $m$ with sequence number $s$ and another correct peer delivers message $m'$ with sequence number $s$, then $m=m'$. 
Thus, the sender cannot equivocate.
In other words, the protocol maintains consistency among the delivered messages with the same sender and sequence numbers but makes no provisions that any parties do deliver the messages. 
In our system we only care about the consistency of sequence numbers, as any party of the channel can be the sender of a message $m$ even after $m$ is correctly broadcast.
We allow this in order to remove the need for parties to share the proof that the transaction is committed, as there is no incentive to do so.


\com{This is actually something in between consistent and reliable broadcast (i like to call ti lazy reliable broadcast). The idea is that parties consistently broadcast updates to the committee, but the committee does not care to accept reliably. Anyone that wants to get the info that consistent broadcast succeeded can reconstruct the proof from the committee or collect it from the source party. If everyone does this, then the final result will be reliable broadcast, but it is only lazily implemented on a per demand basis.} 
\com{
\subsection{Cryptographic Secret Sharing}\label{sec:sharing}

\subsubsection*{Secret Sharing}
The notion of secret sharing\index{secret sharing} was introduced independently by
Blakely~\cite{blakley79keys} and Shamir~\cite{shamir79share} in $1979$. An
$(t,n)$-secret sharing scheme, with $1 \leq t \leq n$, enables a dealer to share a
secret $a$ among $n$ trustees such that any subset of $t$ honest trustees can
reconstruct $a$ whereas smaller subsets cannot. Thus,
a sharing scheme can withstand up to $t-1$ malicious participants.

In the case of Shamir's scheme, the dealer evaluates a degree $t-1$ polynomial $s$
at positions $i > 0$ and each share $s(i)$ is handed out to a trustee. 
The important observation here is that only if a threshold of $t$ honest trustees
collaborates then the shared secret $a = s(0)$ can be recovered (through
polynomial interpolation).

\subsubsection*{Verifiable Secret Sharing}\label{sec:vss}

The downside of these simple secret sharing schemes is that they assume an honest dealer who
might not be realistic in some scenarios. 
Verifiable secret sharing\index{verifiable secret sharing}
(VSS)~\cite{chor1985verifiable,feldman1987practical} adds verifiability to those
simple schemes and thus enables trustees to verify if the shares distributed by a
dealer are consistent, that is, if any subset of a certain threshold of shares
reconstructs the same secret.

\subsubsection*{Distributed Key Generation.} \label{sec:appendix:dkg}

A Distributed Key Generation (DKG) protocol removes the dependency on a trusted
dealer from the secret sharing scheme by having every trustee run a secret sharing round. 
In essence, a (\textit{$n,t$}) DKG~\cite{kate09distributed} protocol
allows a set of $n$ servers to produce a secret whose shares are spread over the
nodes such that any subset of servers of size greater than $t$ can reveal or use
the shared secret, while smaller subsets do not have any knowledge about the
secret. Pedersen proposed the first DKG scheme \cite{pedersen1991non} based
upon the regular discrete logarithm problem without any trusted party. 
We provide a summary of how Pedersen DKG works in Appendix~\ref{app:DKG},
and optionally modify it to meet our consensus needs below.
}
\com{
\subsection{Consensus}

The \emph{Byzantine Generals' Problem}~\cite{lamport82byzantine,pease80reaching} is a more powerful primitive than
the consistent broadcast protocol
where a group of $n$ processors in a distributed
system reach agreement on a value (which requires termination, unlike consistent broadcast).
In \sys the processors only need to decide whether or not to close a channel, hence we only care
for the restricted problem of binary consensus
in which the input is binary.
Pease et al.~\cite{pease80reaching} show that $3f+1$ participants are necessary 
to be able to tolerate $f$ arbitrary faulty processors and still reach consensus.

It is well known that reaching consensus in full asynchrony with a single processor crashing is impossible with a deterministic protocol~\cite{fischer1985impossibility} (FLP), 
hence we need to introduce some stronger assumption for \sys.
In this section, we introduce the types of consensus algorithms we can deploy in \sys,
which do not affect the safety of the system under full asynchrony (unlike existing channel constructions), 
but only rely on stronger assumptions for liveness.

\subsubsection*{Partially Synchronous Consensus}
Consensus is achievable
despite the FLP impossibility
result by adding timing assumptions.
For this variant of consensus, we need to assume
all messages among correct processors arrive within a unknown bound $\Delta$~\cite{dwork88consensus}. 
The system can swing between periods of synchrony and asynchrony, but termination of consensus in only guaranteed during the periods of synchrony. 
The first practical protocol is ~\cite{castro99practical}, but dozens of other more scalable protocols exist, \eg,~\cite{kokoris16enhancing, abraham18hotstuff}. 

\subsubsection*{Randomized Asynchronous Consensus with Trusted Setup}
A second way to circumvent the FLP impossibility is to introduce randomization so that consensus is reached with probability $1$, but there can still exist a non-terminating run with probability $0$.
The most efficient of these protocols assume a shared coin, meaning that all processors return the same random number with a constant probability when they flip
the coin. 
In this setting, we can use Moustefaoui's ABA~\cite{mostefaoui15signature} as it is the most scalable protocol to date with $O(n^2)$ communication complexity.

\com{Next, we show how to circumvent the trusted assumption by assuming a perfect failure detector and aborting the setup in case any processor fails. 
This is a more acceptable assumption than the trusted setup in our setting as the job of the committee that will reach consensus is to always be live.
Therefore, if a committee fails during setup, then \sys can use it as an indicator to use another committee.

Given this assumption, we can run  a modified synchronous distributed key generation protocol by Genarro et al.~\cite{gennaro99secure} to produce a common coin. The details are in Appendix~\ref{app:DKG}. In short, 
we set $N=3f+1$, $t=2f+1$ in ~\cite{gennaro99secure} 
and wait for all participants to reply within $\Delta$ (instead of waiting for $N-f$ responses) or fail the protocol. 
If the protocol finishes then a correct shared private-key is generated. This can be used to produce an unlimited number of deterministic threshold signatures (BLS~\cite{boneh01short} or RSA~\cite{shoup00practical}) which can be used as common-coins~\cite{cachin00random}.
The caveat of our protocol is that it fails even with one crashed watchtower. 
However, since the main job of the committee is to be available at all times, failure to do so during setup is a good indicator to replace the watchtower.}
\com{
\paragraph{Modified Liveness argument for setup.}
In \cite{gennaro99secure} the liveness of the system holds even if $f$ participants crash since the system only needs $f+1$ honest replies before terminating. Furthermore, the safety of the DKG is guaranteed since the reconstruction threshold is $t=f+1$ and synchrony is also assumed during reconstruction. Hence, the $f+1$ honest nodes are guaranteed to reply.

In \sys we convert the protocol to support a fully asynchronous reconstruction phase.
There we run into the problem that if our setup is asynchronous, we can only wait for $2f+1$ nodes to reply before terminating, however, $f$ of those might be malicious.
This means that during the reconstruction of the common coin there is no guarantee that there will be $t=2f+1$ honest nodes holding valid shares. Hence, the liveness of the common coin is not guaranteed.

This is the underlying reason for our strong availability assumption during setup. 
We need to know that there are $2f+1$ nodes that hold valid shares. As a result we need to wait for all $3f+1$ nodes to accept their shares before terminating the setup.
If \sys's setup is done correctly we know that later on there will be $2f+1$ honest share holders (since at most $f$ lied during setup) and when their shares are eventually delivered, then the common coin will be correctly reconstructed.
}

\subsubsection*{Randomized Asynchronous Consensus}

If we do not want to make the assumption of a trusted setup nor of partial synchrony, \sys resorts to reaching randomized asynchronous consensus using EE-ABA~\cite{kokoris19bootstrapping}. The expected communication complexity of this protocol is $O(n^4)$. But since the consensus protocol is rarely employed (usually once per channel during closing), the overhead is acceptable, especially if the size of the committee is small.}

\section{Protocol Overview}
In this section, we present the system and threat model we consider, a high-level overview of \sys and \asys together with the goals we want to achieve.

\subsection{System Model}
\label{sec:model}

We make the usual cryptographic assumptions: the participants are computationally bounded and cryptographically-secure communication channels, hash functions, signatures, and encryption schemes exist. 
We assume the underlying blockchain maintains a distributed ledger that has the properties of \textit{persistence} and \textit{liveness} as defined in \cite{garay15bitcoin}. 
We also assume that any message sent by an honest party will be delivered to any other honest party within a polynomial number of rounds. We do not make any additional assumptions (e.g., known bounds for message delivery).
Furthermore, we do not require a ``perfect'' blockchain system since \sys can tolerate temporary liveness attacks. Specifically, if an adversary temporarily violates the liveness property of the underlying blockchain, this may result in violating the liveness property of channels but will not affect the safety.

\subsection{Threat Model}
We initially assume that at least one party in the channel is honest to simplify the security analysis. However, later, we show that the security analysis holds as long as the ``richest'' party of the channel is rational and intentionally deviates from the protocol if it can increase its profit (utility function).
Regarding the committee, we assume that there are at most $f$ out of $n=3f+1$ Byzantine \guards and we define a threshold $t=2f+1$ to achieve the liveness and safety properties.
The non-byzantine part of the committee is assumed rational; we first prove the protocol goals for $t$ honest \guards, and subsequently align the rational behavior to this through incentives.

\subsection{\sys Overview}

Both parties of a channel agree on a committee of \guards before opening the channel. The \guards commit their identities on the blockchain during the funding transaction of the channel (opening of the channel). After opening the channel on the blockchain, the channel can only be closed either by a transaction published on the blockchain and signed by both parties or by a transaction signed by one of the parties and a threshold ($t$) of honest  \guards. Thus, the committee acts as power of attorney for the parties of the channel. 
Furthermore, \sys employs correct incentives for the $t$ rational \guards to follow the protocol, hence it can withstand $t=2f+1$ rational and $f$ Byzantine \guards,
while the richest channel party is assumed rational and the other byzantine.

A naive solution would then instruct the committee to run asynchronous consensus on every new update, which would cost $O(n^4)$~\cite{kokoris19bootstrapping} per transaction, a rather big overhead for the critical path of our protocol\footnote{For this step, we would require a fully asynchronous consensus protocol without a trusted setup. To the best of our knowledge, \cite{kokoris19bootstrapping} has the lowest communication complexity of $O(n^4)$.}.
Instead in Brick, consensus is not necessary for update transactions, as we only provide guarantees to rational parties (if both parties misbehave one of them might lose its funds).
As a result, every time a new update state occurs in the channel (i.e., a transaction), the parties run a consistent broadcast protocol (cost of $O(n)$) with the committee. Specifically, a party announces to each \guard that a state update has occurred. This announcement is a monotonically increasing sequence number to guarantee that the new state is the freshest state, signed by both parties of the channel to signal that they are in agreement. 
If the consistent broadcast protocol succeeds ($t$ \guards acknowledge reception) then this can serve as proof for both parties that the state update is safe.
After this procedure terminates correctly, both parties proceed to the execution of the off-chain state.

At the end of the life-cycle of a channel, a dispute might occur, leading to the unilateral closing of the channel. Even in this case, we can still guarantee the security and liveness of the closing with consistent broadcast.
The crux of the idea is that if $2f+1$ \guards accepted the last sequence number before receiving the closing request (hence the counterparty has committed), then at least one honest \guard will not accept the closing at the old sequence number. Instead, the \guard will reply to the party that it can only close at the state represented by the last sequence number.
As a result we define a successful closing to be at the maximum of all proposed states, which guarantees safety.
\textit{Although counter-intuitive, this closing process is safe because the transactions are already totally ordered and agreed to by the parties of the channel;} thus, the committee simply acts as shared memory of the last sequence number.

\subsection{\asys Overview}

\asys is designed to enable payment channels in a permissioned, regulated setting, for example, a centrally-banked cryptocurrency. In such a setting, there will be an auditor (\eg,~the IRS) that can check all the transactions inside a channel as these transactions might be taxable or illegal. This is a realistic case as the scalability in payment channels comes from a persistent relationship that models well B2B and B2C relationships that are usually taxed. 
In this setting, we assume that the auditor can punish the parties and the committee externally of the system, hence our goal is to enhance transparency even if misbehavior is detected retroactively.

In order to convert \sys into \asys we must ensure that (a) nothing happens without the committee's approval and (b) a sufficient audit trail is left on-chain to stop regulators from misbehaving.
We resolve the first issue by disabling the parties' ability to close the channel without the participation of the committee. For the second challenge, the parties generate a hashchain of (blinded) states and sequence numbers and \guards always remember the head of the hashchain together with the last sequence number.
To prevent the auditor from misbehaving, we force the auditor to post a ``lawful access request''~\cite{feigenbaum17multiple} on-chain to convince the channel parties to initiate the closing of the channel for audit purposes and send the state history to the auditor.

\subsection{
Reward Allocation \& Collateral}
To avoid bribing attacks, we enforce the \guards to lock  collateral in the channel. The total amount of collateral is proportional to the value of the channel meaning that if the committee size is large, then the collateral per \guard is small. 
More details on the necessary amount of collateral are thoroughly discussed in Sections \ref{sec:incentives} and \ref{sec:limit}.
Additionally, the committee is incentivized to actively participate in the channel with a small reward that each \guard gets when they acknowledge a state update of the channel. 
This reward is given with a unidirectional channel~\cite{gudgeon2019sok}, which does not suffer from the problems \sys solves.
Moreover, the \guards that participate in the closing state of the channel get an additional reward, hence the \guards are incentivized to assist a party when closing in collaboration with the committee is necessary.

\subsection{Protocol Goals}
\label{subsec:properties}
To define the goals of \sys, we first need to define the necessary properties of a channel construction. 
Intuitively, a channel should ensure similar properties with a blockchain system, \ie, a party cannot cheat another party out of its funds, and any party has the prerogative to eventually spend its funds at any point in time. 
The first property, when applied to channels, means that no party can cheat the channel funds of the counterparty, and is encapsulated by \textit{Safety}. 
The second property for a channel solution is captured by \textit{Liveness}; it translates to any party having the prerogative to eventually close the channel at any point in time.
We say that a channel is \textit{closed} when the locked funds of the channel are spent on-chain, while a channel \textit{update} refers to the off-chain change of the channel's state.

In addition, we define \textit{Privacy} which is not guaranteed in many popular blockchains, such as Bitcoin \cite{nakamoto2008bitcoin} or Ethereum \cite{wood2014ethereum}, but constitutes an important practical concern for any functional monetary (cryptocurrency) system.
Furthermore, we define another optional property, \textit{Auditability}, which allows authorized external parties to audit the states of the channel, thus making the channel construction suitable for use on a regulated currency. 
The first three properties are met by \sys, while the latter is only available in \asys.

First, we define some characterizations on the state of the channel, namely, validity and commitment. Later, we define the properties for the channel construction.
Each state of the channel has a discrete sequence number that reflects the order of the state. We assume the initial state of the channel has sequence number $1$ and every new state has a sequence number one higher than the previous state agreed by both parties. We denote by $s_i$ the state with sequence number $i$.

\begin{definition}
A state of the channel, $s_i$, is \textbf{valid} if the following hold:
\begin{itemize}
    \item Both parties of the channel have signed the state $s_i$.
    \item The state $s_i$ is the \textbf{freshest} state, \ie, 
    no subsequent state $s_{i+1}$ is valid.
    \item The committee has not invalidated the state. The committee can invalidate the state $s_i$ if the channel closes in the state $s_{i-1}$.
\end{itemize}
\end{definition}

\begin{definition}
A state of the channel is  \textbf{committed} if it was signed by at least $2f+1$ \guards or is valid and part of a block in the persistent\footnote{The part of the chain where the probability of fork is negligible hence there is transaction finality, \eg, 6 blocks in Bitcoin.} part of the blockchain.
\end{definition}

\begin{definition}[\textbf{Safety}]
A \sys channel will only close in the freshest committed state.
\end{definition}
    
\begin{definition}[\textbf{Liveness}]
Any valid operation (update, close) on the state of the channel  will eventually\footnote{Depending on the message delivery.} be committed (or invalidated).
\end{definition}    
 
\begin{definition}[\textbf{Privacy}]
 No unauthorized~\footnote{Authorized parties are potential auditors of the channel, as described in \asys.} external (to the channel) party learns about the state of the channel (\eg,~the current distribution of funds between the parties of a payment channel) unless at least one of the parties initiate the closing of the channel.
\end{definition} 

\begin{definition}[\textbf{Auditability}]
 All committed states of the channel are verifiable by an authorized third party.
\end{definition}

\section{\sys Design}
In this section, we first present the \sys architecture, and later the additional incentive mechanisms.

\begin{figure*}[t]
    \vspace{-0.3cm}
    \centering
    \includegraphics[width=0.55\textwidth]{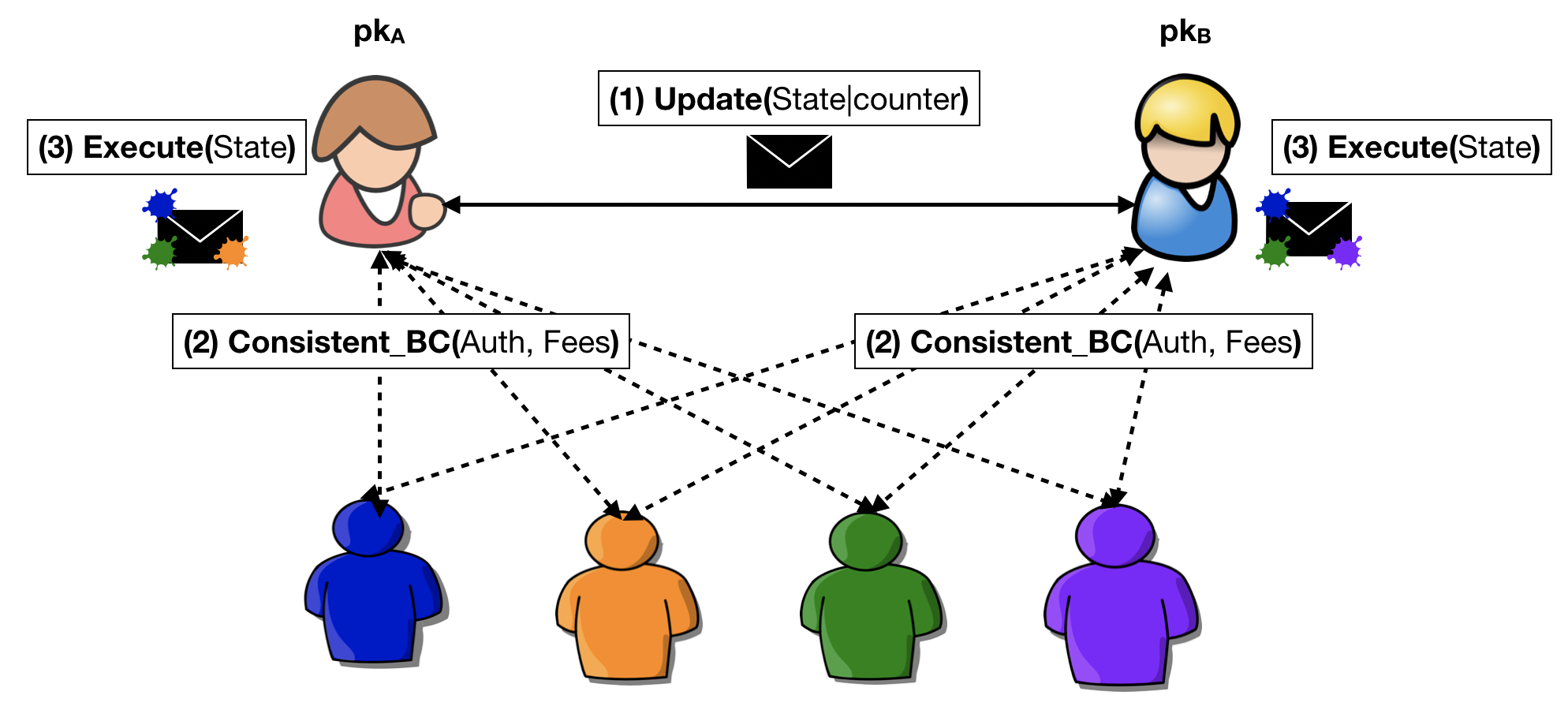}
    \vspace{-0.2cm}
    \caption{ \scriptsize Typical workflow of \sys for a state update. (1) Alice and Bob agree on a new state update. (2) They individually broadcast that they reached agreement on a new state to the committee (with an associated fee). (3) When a threshold of the committee replies that the state is committed,  each sender (Alice or Bob) executes this update as persistent.}
    \label{fig:robustness}
    \vspace{-0.7cm}
\end{figure*}

\sys consists of three phases: \textit{Open}, \textit{Update}, and \textit{Close}. 
We assume the existence of a \textit{smart contract} (self-executed code run on a blockchain). The \sys smart contract has two functions, \textit{Open} and \textit{Close}, that receive the inputs of the protocols and verify that they adhere to the abstractly defined protocols specified below.

Algorithm \ref{alg:open} describes the first phase,
\textit{Open}, which is the opening of a channel between two parties. In this phase, the parties create the initial funding transaction, similarly to other known payment channels such as \cite{DW2015channels,poon2015lightning}. 
However, in \sys we also define three additional parameters during this phase; we include in the funding transaction the hashes of the public keys of the \guards of the channel, denoted by $W_1, W_2,\dots, W_n$, the threshold $t$, and a closing fee $F$.
This fee will be awarded to the responsive \guards in the last phase,
\textit{Close},
if and only if the closing of the channel is done in collaboration with the committee. In this case, the first $t$ \guards that assisted in closing the channel will be rewarded with the amount $F/t$. In addition, each \guard locks a collateral in the \sys smart contract which will be claimed by the parties of the channel during phase \textit{Close} if a \guard misbehaves (\eg,~receives a bribe). 

\begin{protocol}
\DontPrintSemicolon
\SetAlgoLined
\textit{Input:} Channel parties $A,B$, \guards $W_1, W_2, \dots, W_n$, initial state $s_1$, closing fee $F$. \;
\textit{Goal:} Open a \sys payment channel. \; 
\vspace{0.25cm}
1. Both parties  $A,B$ sign:
$open(H(W_1), H(W_2), \dots, H(W_n), t, s_1, F)$. \;
2. Register to $\{M, \sigma(M)\}$ the announcement of Protocol \texttt{Update}($A, B, s_1$).\;
3. Execute Protocol \texttt{Consistent Broadcast}($M, \sigma(M), A, B, W_1, W_2, \dots, W_n$). \;
4. Publish to the blockchain $\sigma(open(H(W_1), H(W_2), \dots, H(W_n), t, F))$.\;
\caption{\texttt{Open}}
\label{alg:open}
\end{protocol}


The second phase of the protocol consists of two sub-protocols, \texttt{Update} (Protocol \ref{alg:update}) and \texttt{Consistent Broadcast} (Protocol \ref{alg:broadcast}). Both algorithms are executed consecutively every time an update occurs, \ie, when the state of the channel changes. During \texttt{Update}, the parties of the channel agree on a new state and create an announcement, which they to the committee using \texttt{Consistent Broadcast}. 
To agree on a new state, both parties sign the hash of the new state\footnote{Blinding the commitment to the state is not necessary for \sys, but we do it for compatibility with \asys.}. 
This way both  parties commit to the new state of the channel, while none of the parties can unilaterally close the channel without the collaboration of either the counterparty or the committee.
The announcement, on the other hand, is the new sequence number signed by both parties of the channel\footnote{We abuse the notation of signature $\sigma$ to refer to the multisig of both $A$ and $B$.}.
\emph{The signed sequence number allows the \guards to verify agreement has been reached between the channel parties on the new state, while the state of the channel remains private.}
Furthermore, if a party wants to close the channel unilaterally, the \guards can provide a signature on the stored announcement $\sigma_{W_i}(M,close)$. 
Consequently, the \sys smart contract, given the signed hash of the freshest state by a party, can verify that the state corresponds to the maximum sequence number submitted by the \guards and the presence of the appropriate signatures, and then close the channel in the correct state.

In Protocol~\ref{alg:broadcast}, for every state update, each party sends to all \guards the announcement including a small \textit{update fee} for watching the channel. Then, each \guard replies to every party that sent the announcement by signing the announcement.
The \guard's signature can be used later to penalize the \guard (a party can claim the \guard's collateral) in case the \guard colludes with a party and signs a previous state to close the channel.

\begin{protocol}
\DontPrintSemicolon
\SetAlgoLined
\textit{Input:} Channel parties $A,B$, current state $s$. \;
\textit{Goal:} Create announcement $M,\sigma(M)$ (sequence number of new state signed by both parties). \; 
\vspace{0.25cm}
1. Both parties $A,B$ sign and exchange: $\{H(s_i,r_i), i\}$, where $r_i$ is a random number and $s_i$ the current state.\;
2. After receiving the signature of the counterparty on $\{H(s_i,r_i), i\}$, each party sends to the counterparty its signature on the sequence number $\sigma(i)$ thus creating the announcement $\{M, \sigma(M)\} (M=i)$.
\caption{\texttt{Update}}
\label{alg:update}
\end{protocol}

\begin{protocol}
\DontPrintSemicolon
\SetAlgoLined
\textit{Input:} Channel parties $A,B$, \guards $W_1, W_2, \dots, W_n$, announcement $\{M, \sigma(M)\}$. \;
\textit{Goal:} Inform the committee of the new update state and verify the validity of the new state. \; 
\vspace{0.25cm}
1. Each party broadcasts to all the \guards $W_1, W_2, \dots, W_n$ the announcement $\{M, \sigma(M)\}$ alongside a fee $r$.\;
2. Each \guard $W_j$, upon receiving $\{M, \sigma(M)\}$, verifies that both parties' signatures are present, and the sequence number is exactly one higher than the previously stored sequence number. 
If the \guard has published a closing state, it ignores the  state update. 
Otherwise, $W_j$ stores the announcement $\{M, \sigma(M)\}$ (replacing the previous announcement), signs $M$, and sends the signature $\sigma_{W_j}(M)$ to the parties that payed the update fee $r$.\;
3. Each party, upon receiving at least $t$ signatures on the announcement $M$, considers the state committed and proceeds to the state transition.\;
\caption{\texttt{Consistent Broadcast}}
\label{alg:broadcast}
\end{protocol}

The last phase of the protocol, \textit{Close}, can be implemented in two different ways: the first is similar to the traditional approach for closing a channel (Protocol \ref{alg:optclose}: \texttt{Optimistic Close}) where both parties collectively sign the freshest state (closing transaction) and publish it on-chain. However, in case a channel party is not responding to new state updates or closing requests, the counterparty can unilaterally close the channel in collaboration with the committee of the channel. 
In Protocol \ref{alg:pesclose}: \texttt{Pessimistic Close}, a party initiates the last phase of the protocol by requesting from each \guard its signature on the last committed sequence number.
A \guard, upon receiving the closing request, publishes on-chain a closing announcement, \ie, the stored sequence number signed along with a flag close.
When $t$ closing announcements are on the persistent part of the chain, the party recovers the state that corresponds to the maximum sequence number from the closing announcements $s_{i}$. Then, the party publishes state $s_{i}$ and the random number $r_{i}$ along with the signatures of both parties on the corresponding hash and sequence number $\sigma(H(s_{i},r_{i}),i)$ on-chain. 
As soon as these data are included in a (permanent), the \sys smart contract verifies and closes the channel. In particular, the smart contract (a)~recovers from the submitted state and salt the hash $H(s_{i},r_{i})$ and the maximum sequence number $i$, (b)~verifies that the signatures of both parties are on the message $\{H(s_{i},r_{i}),i\}$, and (c)~there are $t$ submitted announcements that correspond to \guard identities committed on-chain in Protocol~\ref{alg:open}. If all verifications check the smart contract closes the channel in the submitted state and rewards $F/t$ to the first $t$ participating \guards.

\begin{protocol}
\DontPrintSemicolon
\SetAlgoLined
\textit{Input:} Channel parties $A,B$, state $s$. \;
\textit{Goal:} Close a channel on state $s$, assuming both parties are responsive and in agreement. \; 
\vspace{0.25cm}
1. A party $p \in \{A,B\}$ broadcasts the request $close(s)$.\;
2. Both parties $A,B$ sign the state $s$ (if they agree) and exchange their signatures. \;
3. The party $p$ (or any other channel party) publishes the signed by both parties state, $\sigma_{A, B}(s)$ on-chain.\;
\caption{\texttt{Optimistic Close}}
\label{alg:optclose}
\end{protocol}

\begin{protocol}
\DontPrintSemicolon
\SetAlgoLined
\textit{Input:} Party $p \in \{A, B\}$, \guards $W_1, W_2, \dots, W_n$, state $s_i$, random number $r_i$.
\;
\textit{Goal}: Close a channel on state $s_i$ with the assist of the committee. \; 
\vspace{0.25cm}
1. Party $p$ broadcasts to the \guards $W_1, W_2, \dots, W_n$ the request $close()$. \;
2. Each \guard $W_j$ publishes on-chain a signature on the (last) stored announcement $\sigma_{W_j}(M, close)$ and stops signing new state updates.\;
3. Party $p$, upon verifying $t$ on-chain signed announcements by the \guards, recovers the $max(i)$ that is included in the announcements. Then, party $p$  publishes on-chain the state $s_i$, the random number $r-i$ and the signature of both parties on $\{H(s_{i},r_{i}),i\}$ \;
4. After the state is included in a (permanent) block, the smart contract recovers $\{H(s_{i},r_{i}),i\}$, verifies both parties' signatures and the \guards identities, and then closes the channel in state $s_i$.
The first $t$ \guards whose signature are published on-chain get an equal fraction of the closing fee $F/t$. \;
\caption{\texttt{Pessimistic Close}}
\label{alg:pesclose}
\end{protocol}

\subsection*{Incentivizing Honest Behavior}\label{sec:incentives}

\sys actually works without the fees, if we assume one honest party and $t$ honest \guards. However, our goal is to have no honest assumptions and instead align rational behavior to honest through incentives. There are three incentive mechanisms in \sys: 
\begin{enumerate}
    \item \textit{Update Fee ($r$). }The parties establish a one-way channel~\cite{gudgeon2019sok} with each \guard where they send a small payment every time they want a signature for a state update. Note that the update fee is awarded to the \guards at step 1 of Protocol~\ref{alg:broadcast}.
    \item \textit{Closing Fee ($F$). }
    During phase \texttt{Open} (Protocol \ref{alg:open}), the parties lock a closing fee $F$ in the channel, which is split only among the  first $t$ \guards that participate in Protocol \ref{alg:pesclose}. 
    If the channel closes optimistically (Protocol~\ref{alg:optclose}), the closing fee returns to the parties.
\item \textit{\guard Collateral. }During phase \textit{Open}, each \guard locks collateral at least equal to the amount locked in the channel $v$ divided by $f$. The collateral is either returned to the \guard at the closing of the channel or claimed by a channel party that provides a proof-of-fraud. 
    A proof-of-fraud consists of two conflicting messages signed by the same \guard:  (a)~a signature on an announcement on a state update of the channel, and (b)~a signature on an announcement for closing on a previous state of the channel.
    
    In case, a party submits the closing announcements and at most $f$ proofs-of-fraud, to close the channel we execute a second closing process excluding the \guards that committed fraud. Then, the channel closes in the state with the maximum sequence number of the announcements submitted by a total of $t$ non-excluded \guards. On the other hand, if a party submits the closing announcements and at least $f+1$ proofs-of-fraud, the party that submitted the proofs-of-fraud claims only the collateral and the entire channel balance is awarded to the counterparty.
    Lastly, if no proofs-of-fraud are submitted the channel closes as described in Protocol~\ref{alg:pesclose}.
 \end{enumerate}

We further demand that the size of the committee is at least $n > 7$, hence $f>2$. As a result, we guarantee there is at least one channel party with locked funds greater than each individual \guard's collateral, $\frac{v}{2} > \frac{v}{f}$.
This restriction along with the aforementioned incentive mechanisms ensure resistance to collusion and bribing of the committee, meaning that following the protocol is the dominant strategy for the rational \guards.

\section{\sys Analysis}\label{sec:security}
We first prove \sys satisfies \textit{Safety} and \textit{Liveness} assuming at least one honest channel party and at least $t$ honest \guards. 
Furthermore, we note that \sys achieves \textit{Privacy} even if all \guards are byzantine while the channel parties are rational.
Then, we show that rational \guards that want to maximize their profit will follow the protocol specification, for the incentive mechanisms presented in Section~\ref{sec:incentives}.
Essentially, we show that \sys enriched with the proposed incentive mechanisms is dominant-strategy incentive-compatible.
We complete our analysis by showing that \sys security holds as long as the richest channel party is rational in Appendix~\ref{sec:rationalparties}.

\subsection{Security Analysis under one Honest Participant and $t$ Honest \guards}
 \begin{theorem}\label{thm:safety}
 \emph{\sys} achieves safety under asynchrony assuming one byzantine party and $f$ byzantine \guards.
 \end{theorem}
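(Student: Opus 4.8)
\textbf{Proof plan for Theorem~\ref{thm:safety}.}
The plan is to show that whenever the \sys smart contract accepts a closing, the state it closes in is the freshest committed state, as required by the \textbf{Safety} definition. I would proceed by a case analysis on how the channel is closed: via \texttt{Optimistic Close} (Protocol~\ref{alg:optclose}) or via \texttt{Pessimistic Close} (Protocol~\ref{alg:pesclose}). The optimistic case is immediate: the smart contract only accepts a closing transaction carrying both parties' signatures on the state, and since one party is honest, it will sign only the freshest state it knows; moreover, by the \texttt{Update} protocol a state becomes committed only after the honest party has already signed its hash, so the honest party's latest signed state is exactly the freshest committed one. The interesting case is the pessimistic close, possibly initiated by the byzantine party on a stale state.

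For the pessimistic case, let $s_k$ be the freshest committed state. By the \textbf{committed} definition, either $s_k$ sits in the persistent chain (in which case the smart contract already has it and will not accept anything older), or at least $2f+1$ \guards signed the announcement $M = k$ in Protocol~\ref{alg:broadcast}. The key combinatorial step: among those $2f+1$ signers, at most $f$ are byzantine, so at least $f+1$ honest \guards stored the announcement with sequence number $\ge k$. Now consider the closing process: the smart contract only closes once $t = 2f+1$ on-chain closing announcements $\sigma_{W_j}(M, close)$ are posted. Since there are only $n = 3f+1$ \guards and at most $f$ are byzantine, any set of $2f+1$ closing announcements must include at least one of those $f+1$ honest \guards that hold sequence number $\ge k$; by the protocol an honest \guard publishes its \emph{stored} announcement, which has sequence number $\ge k$. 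Hence $\max(i)$ over the submitted announcements is $\ge k$. Combined with the fact that $s_k$ is the freshest \emph{valid} state (no valid $s_{k+1}$ exists, so no honest \guard could have stored a higher sequence number with valid signatures), we get that the recovered maximum is exactly $k$, and step~4 of the smart contract verifies both parties' signatures on $\{H(s_k,r_k),k\}$ before closing — so it closes in $s_k$.

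I would also need to dispatch two loose ends. First, asynchrony: the argument nowhere invokes a timing bound — the honest \guards' stored sequence numbers are monotone and the intersection argument is purely about set sizes, so an adversary controlling message delivery cannot change the conclusion, only delay when $t$ announcements appear (that is a liveness, not safety, concern). Second, the fraud/collateral branch described in Section~\ref{sec:incentives}: if the byzantine party (or byzantine \guards) submit proofs-of-fraud, I would observe that a valid proof-of-fraud against $W_j$ excludes exactly the \guards who signed a stale closing announcement after having signed a later update, so after excluding them the surviving $t$ announcements still include an honest \guard holding sequence $\ge k$, preserving the invariant; and if $f+1$ proofs-of-fraud appear the whole balance goes to the honest counterparty, which trivially satisfies safety for the honest party. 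The main obstacle I anticipate is stating the quorum-intersection argument cleanly enough to cover both the ``$2f+1$ signed update'' and ``on persistent chain'' horns of the \textbf{committed} definition simultaneously, and making sure the ``freshest'' clause rules out the byzantine \guards fabricating a spuriously high sequence number — which it does, because a higher announcement would require both parties' signatures, impossible while one party is honest and has not signed $s_{k+1}$.
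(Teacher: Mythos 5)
Your proposal follows essentially the same route as the paper: a case split on optimistic versus pessimistic close, with the pessimistic case settled by the quorum-intersection count ($2f+1$ update-signers and $2f+1$ closing announcements must share an honest \guard among $n=3f+1$ with at most $f$ byzantine). The paper phrases this as a proof by contradiction (if the channel closes at $s_i$ with a committed $s_k$, $k>i$, then $f+1$ honest \guards never saw $s_k$, so at most $2f<2f+1$ could have signed it), which is just the contrapositive of your direct argument, so the core idea is identical. One step of yours is imprecise, though: you conclude that the recovered maximum is \emph{exactly} $k$ on the grounds that ``no valid $s_{k+1}$ exists,'' but in the paper's definitions validity and commitment are distinct --- there may well be a valid $s_{k+1}$ signed by both parties that has not yet gathered $2f+1$ \guard signatures, in which case the closing can legitimately land on a sequence number above $k$. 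The paper handles this explicitly by observing that closing in a valid but not-yet-committed fresher state is still safe, because that state becomes the freshest committed state once it is on-chain; you should weaken your claim from ``exactly $k$'' to ``at least $k$, and any higher closing state is valid and becomes committed upon closing.'' Your extra remarks on the proof-of-fraud exclusion branch and the $f+1$ proofs-of-fraud case go beyond what the paper's proof covers and are a reasonable supplement, not a divergence.
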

 
 \begin{proof}
    In \sys there are two ways to close a channel (phase \textit{Close}), either by invoking Protocol~\ref{alg:optclose} or by invoking Protocol~\ref{alg:pesclose}. 
    In the first case (\texttt{Optimistic Close}), both parties agree on closing the channel in a specific state (which is always the freshest valid state \footnote{We assume that if the parties want to close the channel in a previous state, they will still create a new state similar to the previous one but with an updated sequence number.}). As long as this valid state is published in a block in the persistent part of the blockchain, it is considered to be committed. Thus, safety is guaranteed.
    
    In the second case, when Protocol~\ref{alg:pesclose}: \texttt{Pessimistic Close} is invoked, a party has decided to close the channel unilaterally in collaboration with the committee. 
    The \sys smart contract verifies that the state is valid, \ie, the signatures of both parties are present and the sequence number of that state is the maximum from the submitted announcements. Given the validity of the closing state, it is enough to show that the channel cannot close in a state with a sequence number smaller than the one in the freshest committed state. This holds because even if the channel closes in a valid but not yet committed state with sequence number larger than the freshest committed state, this state will eventually become the freshest committed state (similarly to Protocol~\ref{alg:optclose}).
    
    Let us denote by $s_i$ the closing state of the channel. Suppose that there is a committed state $s_k$ such that $k>i$, thus $s_i$ is not the freshest state agreed by both parties. We will prove safety by contradiction. 
    For the channel to close at $s_i$ at least $t=2f+1$ \guards have provided a signed closing announcement, and the maximum sequence number of these announcements is $i$. Otherwise the \sys smart contract would not have accepted the closing state as valid.
    According to the threat model, at most $n-t=f$ \guards are Byzantine, thus at least $f+1$ honest \guards have submitted a  closing announcement. Hence, none of the $f+1$ honest \guards have received and signed the announcement of any state with sequence number greater than $i$. 
    However, in phase~\textit{Update \& Consistent Broadcast}, an update state is considered to be committed, according to Protocol~\ref{alg:broadcast} (line 3), if and only if it has been signed by at least $t=2f+1$ \guards. Since at most $n-(f+1)=3f+1-f-1=2f < 2f+1$ \guards have seen (and hence signed) the state $s_k$, the state $s_k$ is not committed. Contradiction.
    
    We note that safety is guaranteed even if both parties crash. This holds because a state update requires unanimous agreement between the parties of the channel, \ie, both parties sign the hash of the new state.
\end{proof}

 \begin{restatable}{theorem}{liveness}\label{thm:liveness}
 \emph{\sys} achieves liveness under asynchrony assuming one byzantine party and $f$ byzantine \guards.
 \end{restatable}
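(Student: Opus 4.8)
The plan is to prove liveness by a case analysis on the operation type, replacing any synchrony bound by the two ``eventually'' assumptions of the system model: every message from an honest party is delivered after a polynomial number of rounds, and the underlying ledger is eventually persistent and live. Since a quorum of $t=2f+1$ \guards is honest and a quorum of $t$ honest acknowledgements suffices for commitment (Protocol~\ref{alg:broadcast}, line~3), the honest side can always make progress on its own.

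\emph{Updates.} Suppose $P$ is the honest channel party and it is pursuing a valid update to $s_i$. Then $P$ has executed Protocol~\ref{alg:update} (so it holds $(s_i,r_i)$ and the multisignatures on $\{H(s_i,r_i),i\}$ and on $i$) and runs Protocol~\ref{alg:broadcast}, broadcasting $\{M,\sigma(M)\}$, $M=i$, with the fee to all $n$ \guards; having processed states consecutively, $P$ has likewise broadcast the announcements for all sequence numbers below $i$. By eventual delivery, each of the $2f+1$ honest \guards will either have caught up to stored sequence number $i-1$ and then store, sign and return $\sigma_{W_j}(i)$ to $P$, or will have already published a closing announcement. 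In the first situation $P$ eventually collects $\ge t$ signatures and $s_i$ is committed; in the second a \texttt{Pessimistic Close} is underway, which by the next paragraph terminates with the channel closed in some $s_j$, $j\le i$, so the operation on $s_i$ is committed if $j=i$ and invalidated otherwise.

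\emph{Closes.} If an honest party invokes \texttt{Optimistic Close} (Protocol~\ref{alg:optclose}) and the counterparty is responsive, $P$ obtains $\sigma_{A,B}(s)$ and publishes it; by ledger liveness it enters the persistent chain and $s$ is committed. Otherwise $P$ uses \texttt{Pessimistic Close} (Protocol~\ref{alg:pesclose}): it broadcasts $close()$, so by eventual delivery the $2f+1$ honest \guards each publish $\sigma_{W_j}(M_j,close)$ on-chain and stop signing updates, and by ledger liveness these $\ge t$ announcements become persistent. $P$ then picks $i=\max_j M_j$ over the \emph{valid} announcements. The crucial observation is that $P$ can always reveal that winning state: a valid stored announcement at sequence $i$ contains $P$'s own signature on $i$, which by step~2 of Protocol~\ref{alg:update} $P$ emitted only after learning $(s_i,r_i)$; hence $P$ still holds the preimage and publishes $s_i$, $r_i$, and $\sigma(H(s_i,r_i),i)$. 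By ledger liveness this is included, the smart-contract checks pass, and the channel closes in $s_i$. If $P$ additionally submits proofs-of-fraud, these exclude at most $f$ (Byzantine) \guards, leaving the $2f+1$ honest ones, so the re-run still terminates; with $f+1$ proofs-of-fraud $P$ simply claims the collateral and the whole balance. In every branch the channel is eventually closed, i.e.\ committed.

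The main obstacle is arguing that the pessimistic closing path cannot deadlock under full asynchrony and that $P$ is never unable to open the state selected by the committee. Both reduce to the same pair of facts: the $2f+1$ honest \guards always respond and the ledger is eventually live, so the needed announcements and transactions eventually become persistent; and any announcement a \guard can store at sequence $i$ must carry $P$'s signature on $i$, which $P$ produced only after obtaining $(s_i,r_i)$, so $P$ retains the data required to finalize the closing. Once these are established the remaining steps are routine, and because the argument invokes timing only through the two ``eventually'' clauses, it holds with no synchrony assumption on the network.
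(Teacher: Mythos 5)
Your proof is correct and follows essentially the same route as the paper's: a case analysis on update versus (optimistic/pessimistic) close, using eventual delivery to the $2f+1$ honest \guards, the $t$-quorum commitment rule, and ledger liveness, with the same key dichotomy that an honest \guard either signs a valid update or has already published a closing announcement, whence the update is eventually committed or invalidated by the close. You argue directly rather than by contradiction and add the useful detail that the closing party always holds the preimage of the maximal announced sequence number, but the substance matches the paper.
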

\begin{proof}
We will show that every possible valid operation is either committed or invalidated. There are two distinct operations: \textit{close} and \textit{update}. 
We say that operation close applies in a state $s$ if this state was published on-chain either in collaboration of both parties (Protocol~\ref{alg:optclose}) or unilaterally by a channel party as the closing state (Protocol~\ref{alg:pesclose}, step 3).

If the operation is close and not committed, either the parties did not agree on this operation (\texttt{Optimistic Close}), or a verification of the smart contract failed (\texttt{Pessimistic Close}). In both cases, the operation is not valid.

Suppose now the operation is close and never invalidated. Then, if it is an optimistic close, all the parties of the channel have signed the closing state since it is valid. Since at least one party is honest the transaction will be broadcast to the blockchain.
Assuming the blockchain has liveness, the state will be eventually included in a block in the persistent part of the blockchain and thus will be eventually committed.
On the other hand, if it is a pessimistic close and not invalidated, the smart contact verifications were successful therefore the state was committed.

Suppose the operation is a valid update and it was never committed. 
Since the operation is valid and at least one party of the channel is honest, the \guards eventually received the state update (\texttt{Consistent Broadcast}). However, the new state was never committed, therefore at least $f+1$ \guards did not sign the update state. We assumed at most $f$ Byzantine \guards, hence at least one honest \guard did not sign the valid update state. According to Protocol~\ref{alg:broadcast} (line 2), an honest \guard does specific verifications and if the verifications hold the \guard signs the new state. Thus, for the honest \guard that did not sign, one of the verifications failed. If the first verification fails, then a signature from the parties of the channel is missing thus the state is not valid. Contradiction. The second verification concerns the sequence number and cannot fail, assuming at least one honest channel party. 
Thus, the \guard has published previously a closing announcement on-chain and ignores the state update. 
In this case, either (a)~the closing state of the channel is the new state - submitted by another \guard that received the update before the closing request - or (b)~the closing state had a smaller sequence number from the new state. In the first case (a), the new state is committed eventually (on-chain), while in the second case (b) the new state is invalidated as the channel closed in a previous state.

For the last case, suppose the operation is a valid update and it was never invalidated. We will show the state update was eventually committed. Suppose the negation of the argument towards contradiction. We want to prove that an update state that is not committed is either not valid or invalidated. The reasoning of the proof is similar to the previous case.
\end{proof}

Lastly, \sys achieves privacy even against byzantine \guards since they only receive the sequence number of each state update in a channel. Therefore, as long as parties do not intentionally reveal any information to anyone external, privacy is maintained.
We note, however, that in a network with multiple channels each channel needs to maintain a unique id which will be included in the announcement to avoid replay attacks.

\subsection{Incentivizing Rational \guards}
\label{sec:rationalwatchtowers}
In this section, we show that rational watchtowers that want to maximize their profit follow the protocol, \ie, that deviating from the honest protocol execution can only result in decreasing a \guard's expected payoff. 
We consider each phase of \sys separately, and evaluate the \guard's payoff for each possible action. 

\subsubsection*{Open} 
During the opening of the channel, we assume the \guards follow Protocol~\ref{alg:open} and commit the requested collateral on-chain (\sys smart contract). Otherwise, the parties simply replace the unresponsive/misbehaving \guards.
 
\subsubsection*{Update}
In Protocol~\ref{alg:update} the \guards do not participate. 
Thereby, the only action by the \guards is step 2 of Protocol~\ref{alg:broadcast}. A \guard can deviate as follows:
\begin{itemize}
    \item The \guard can simply ignore the party's request to  update the state (does not reply to the party), since the update fee is already collected.
    At first sight, this game looks like a fair exchange game, which is impossible to solve without a trusted third party~\cite{dziembowski18fairswap}.
Furthermore, we cannot use a blockchain to solve it~\cite{pagnia99impossibility} as the whole point of state channels is to reduce the number of transactions that go on-chain.
Fortunately, the state update game is a repeated game where \guards want to increase their expected rewards in the long term. Thereby, they know that if they receive an update fee from a party and do not respond, then the party will stop using them (there is $f$ fault tolerance in \sys), thus their expected payoff for the repeated game will decrease.
\item The \guard sends its signature on the new state but does not store the new announcement. 
In this case, if a party requests to close unilaterally the \guard cannot participate and thence collect the closing fee. Therefore, the \guard's expected payoff decreases.
\item The \guard replies to the party although it has published a closing announcement on-chain. Then, the party can penalize the \guard by claiming its collateral.
\item The \guard does not perform the necessary verifications. Then, the \guard might unintentionally commit fraud by signing an invalid state, hence allow the party to claim the collateral.
\end{itemize}
  
\subsubsection*{Close}
\guards do not participate in the optimistic close (Protocol~\ref{alg:optclose}).
As a result, we only evaluate the possible actions a \guard can deviate from the honest execution of Protocol~\ref{alg:pesclose}. 
\begin{itemize}
    \item The \guard does not publish a closing announcement on-chain. 
    In this case, the \guard can attempt to enforce a hostage situation on the funds of the channel in collaboration with  other \guards in order to blackmail the channel parties.
    However, to enforce a hostage situation on the channel's funds, at least $f+1$ \guards must collude, hence at least one rational \guard must participate. However, a rational \guard cannot be certain that the other \guards will indeed maintain the hostage situation or participate in the consensus thus claim the closing fee (only the first $t$ \guards get paid); therefore, we reduce our problem to the prisoner's dilemma problem. As a result, the only strong Nash equilibrium for a rational \guard is to immediately publish a closing announcement on-chain in order to claim later the closing fee.
    \item The \guard signs later a new state update. Then, the \guard allows the party receiving the signed announcement with higher sequence number than the closing announcement to create a proof-of-fraud and claim the \guard's collateral.
    \item The \guard publishes on-chain a closing announcement that is not the stored one\footnote{We assume the \guard will only sign as closing an announcement that used to be valid in an attempt to commit fraud. Otherwise, the party will claim the \guard's when the smart contract verifications fail.}. However, the \guard has already sent a signature on an announcement with a higher sequence number, therefore the party can create a proof-of-fraud and claim the \guard's collateral. 
    Hence, any rational \guard will request as a bribe an amount at least marginally higher than the collateral to perform the fraud. 
    We will show that no rational party will provide such a bribe to any rational \guard. Thus, all rational \guards will honestly follow the protocol and submit the stored announcement for closing the channel. 
    
    To that end, let us denote by $p_A$ the payoff function of party $A$ (the cheating party wlog). The payoff function depends on the channel balance of party $A$ when requesting to close the channel, denoted by $c_A$ ($0\geq c_A \geq v$, where $v$ is the total channel funds), the collateral the party claims through proofs-of-fraud, and the total amount spent for bribing rational \guards.
    Formally, \[p_A = c_A + x\frac{v}{f} - b\big(\frac{v}{f}+\epsilon\big)\]
    where $x$ is the number of proofs-of-fraud submitted by the party, $b$ the number of bribed rational \guards, and $\epsilon$ the marginal gain over the collateral the \guards require to be bribed. Note that each \guard has locked $\frac{v}{f}$ as collateral. Further, note that the Byzantine \guards do not require a bribe but act arbitrarily malicious, meaning that they will provide a proof-of-fraud to party $A$ without any compensation.
    
    Next, we analyze all potential strategies for party $A$ and demonstrate that the payoff function maximizes when the party does not bribe any rational \guard, but closes the channel in the freshest committed state.
    There are four different outcomes in the strategy space of party $A$:
    \begin{enumerate}
        \item The channel closes using Protocol~\ref{alg:pesclose}. Then, $p_A = c_A $.
        \item Party $A$ submits the proofs-of-fraud only from the Byzantine \guards and the channel closes in the freshest committed state (by the remaining $t$ rational \guard). Then, $p_A = c_A + f \frac{v}{f} = c_A + v$. Note that the payoff function in this case maximizes for $x=f$.
        \item Party $A$ bribes $b > 0$ rational \guards and the channel closes in the freshest committed state. Then, $p_A \leq c_A + (f+b)\frac{v}{f} - b(\frac{v}{f}+\epsilon) = c_A + v -b\epsilon \leq c_A + v - \epsilon.$
        The first inequality holds because the bribed \guards might be part of the second set of $t$ closing announcements, in which case they do not contribute to the claimed collateral.
        \item Party $A$ bribes $b > 0$ rational \guards and the channel closes in a state other than the freshest committed state. 
        Let us denote by $y$ the number of rational \guards that provide a proof-of-fraud to the party, and $m$ the number of submitted proofs-of-fraud that belong to Byzantine \guards. Then, the possible actions in this strategy are depicted in Table~\ref{tab:actions}. 
        \begin{table}[ht]
        \centering
        \caption{Potential actions to close the channel in a ``fraudulent'' state.}
        \begin{tabular}{l|c|c|c}
            \toprule
                    \textbf{Action}  & \textbf{Proof-of-fraud} & \textbf{Close} & \textbf{Total}\\
            \midrule
                   \textbf{Byzantine} & $m$  & $f-m$ & $f$  \\
            \midrule
                    \textbf{Bribed} & $y$ & $f+1-(f-m)$ & $y+m+1$\\
                    \textbf{(rational)}         & &   $=m+1$ &    \\
            \midrule
                    \textbf{Total} & $m+y$ & $f+1$ & - \\
            \bottomrule
        \end{tabular}
        \label{tab:actions}
        \end{table}
        Note that at least $f+1$ misbehaving \guards are required to close the state in a previous state since we assume there can be $f$ slow rational \guards that have not yet received the update states.
        In this case, the payoff function is $$p_A \leq v + \big(m+y\big)\frac{v}{f} - \big(y+m+1\big)\big(\frac{v}{f}+\epsilon\big)$$
        $$= v - \frac{v}{f} - \epsilon\big(y+m+1\big) \leq v-\frac{v}{f}-\epsilon$$
    \end{enumerate}
    where the first inequality holds since $0 \geq c_A \geq v$.
    Therefore, the payoff function maximizes in case the party follows the second strategy, \ie, when the channel closes in the freshest committed state and no rational \guards are bribed.
\end{itemize}
We notice that in any possible deviation of the protocols the \guard's expected payoff decreases, thus the  dominant strategy of a rational \guard is to follow the protocols.

\section{\asys Design}
In this section, we describe the \asys system design, while the security analysis for \asys can be found in Appendix~\ref{app:plus}.


\asys consists, similarly to \sys, of three phases, \textit{Open}, \textit{Update}, and \textit{Close}. However, \asys has one additional functionality, \textit{Audit}, that allows an authorized third party to audit the states of a channel. Invoking this functionality though enforces the closing of the channel.
The \textit{Audit} functionality is illustrated in Figure~\ref{fig:audit} (Appendix~\ref{app:plus}) and described in detail in Protocol~\ref{alg:audit}. 

To enable the \textit{Audit} functionality and therefore achieve \textit{Auditability}, we disable Protocol~\ref{alg:optclose}: \texttt{Optimistic Close}, and enforce the parties to close in collaboration with the committee. This way we guarantee that all states of the channel are available to the committee and hence to the potential auditor for verification.
Moreover, we modify Protocol~\ref{alg:update} and Protocol~\ref{alg:broadcast} (phase \textit{Update}) such that the \guards store a hash-chain of the state history instead of the sequence number of the freshest valid state they received. Thereby, we ensure that the parties cannot present an alternate state history to the auditor as it achieves fork-consistency~\cite{mazieres2002building}.

 The \textit{Audit} functionality is initiated by an authorized third party, the auditor, who publishes an access request on-chain. 
 Then, the \guards of the channel verify the validity  of the access request and initiate the closing of the channel by publishing the closing announcements (head of the hash-chain of the state history) on-chain (Protocol~\ref{alg:pesclose}: \texttt{Pessimistic Close}). 
 After the execution of Protocol~\ref{alg:pesclose}, the closing state is on-chain and both the (honest) \guards and parties of the channel have a consistent view of the channel history. 
Both parties send to the auditor the entire state history.
The auditor then verifies the state history received from both parties by computing the hash-chain and comparing the last hash with the hash that corresponds to the maximum sequence number published by the \guards (\ie, the hash that corresponds to the closing state of the channel). If the parties misbehave and send an alternate state history the auditor can pursue external punishment (\eg,~legal action).

\begin{protocol}
\DontPrintSemicolon
\SetAlgoLined
\textit{Input:} Auditor $A$ of channel $c$, audit smart contract with access on the information published on the blockchain, from Protocol~\ref{alg:open}.
\;
\textit{Goal}: Audit of the channel. \; 
\vspace{0.25cm}
1. The auditor $A$ publishes on-chain the access request for channel $c$.\;
2. Each \guard of the channel, upon verifying the validity of the access request, initiates the closing phase of the channel (Protocol~\ref{alg:pesclose}), and publishes on chain the stored announcement (head of the hash-chain of state history).\;
3. After the execution of Protocol \ref{alg:pesclose}, the channel is closed on-chain in a committed state $s$. Then, both parties of the channel send to the auditor the state history.\;
4. The auditor collects from the blockchain the hashes published by the \guards and selects the hash that corresponds to the maximum sequence number (\ie, corresponds to the closing state $s$). Then, the auditor, upon receiving the state history from both parties, verifies it by re-creating the hash-chain.
If a party does not respond to the access request or presents a different state history the auditor pursues external punishment.
\caption{\texttt{Audit}}
\label{alg:audit}
\end{protocol}
 
\section{Evaluation of \sys}\label{sec:evaluation}
In this section we evaluate both the cost of consistent broadcast as well as the cost of \sys's on-chain operations. The key questions we want to answer are:
(a) How does the cost of deploying \sys change as we increase the number of \guards (\ie, the security and fault tolerance), and (b) how does the cost of the off-chain part scale now that we need to interact with the \guards every time in order to protect against network attacks.

\paragraph{Solidity Smart Contract}

To evaluate the cost of deployment of \sys on Ethereum, we have implemented the on-chain operations in the form of a smart contract in Solidity\footnote{The source code of the smart contract is released under the open source MIT license and is available anonymously at \url{https://github.com/ndss2020-brick-submission/brick}.}. \com{The repository will be deanonymized for the final publication.} We measured the gas cost of deploying and operating the smart contract on the Ethereum blockchain. Our results are illustrated in Figure~\ref{fig:gas}. Our gas measurements are conducted using prices as of May 2020, namely the fiat price of $1$ ETH = $195.37$ EUR and a gas price of 20 Gwei (for convenience, prices on the diagram are shown in both EUR and Ether). The contract was implemented in Solidity $0.5.16$ and measurements were performed using using the solc 0.6.8 compiler with optimizations enabled, deployed on a local ganache-cli blockchain using truffle and web3. The measurements concern the deployment of the smart contract, the opening of the channel, optimistically closing the channel, and pessimistically closing the channel.
The contract allows the parties to specify the number $n$ of \guards they desire to involve as well as their identities. To aid with the EVM implementation, we opted for the \texttt{secp256k1} elliptic curve signature scheme~\cite{gura2004comparing,courtois2014optimizing}, as the signatures generated by it can be verified in a gas-efficient manner in Solidity using the \texttt{ecrecover} precompiled smart contract~\cite{wood2014ethereum}. Additionally, the signature scheme makes off-chain signatures compatible with on-chain accounts and as such signatures made off-chain can be verified using public keys available on-chain. We measured the gas cost for \guard values of $n = 3$ to $30$. We recommend the value of $n = 13$ as highlighted in the figure, since it is safe, has good performance and aligns the incentives correctly.

Once the contract is deployed on the Ethereum network, Alice funds it first. Subsequently, once Alice's funding transaction is finalized, Bob funds it. Once Bob's funding transaction is finalized, the collateral can be calculated and so the \guards can fund it in any order simultaneously. When all \guards have funded the contract, any of the two parties can open the channel. At any time prior to opening the channel, any party or \guard can withdraw their money, at which point the channel is cancelled and can no longer be opened, but allows the rest of the parties to withdraw as well, in any order. Once the channel is open, the parties can continue exchanging states off-chain. If multiple \sys channels are used, then the cost of smart contract deployment can be amortized over all of them by abstracting the common functionality into a Solidity library. However, the opening and closing costs are recurrent. We remark here that our cost of deployment ($\approx 9$ EUR) is comparable to other state channel smart contracts which perform different operations under different assumptions (\eg, at current prices, the deployment of the $3$ Pisa~\cite{mccorry2019pisa} contracts amounts to $\approx 17$ EUR).

When the parties wish to close the channel optimistically, initially Alice submits a transaction to the smart contract requesting the channel to close. This request contains Alice's claimed closing state (namely, Alice's value at closing time, as Bob's value at closing time can be deduced from this). Once Alice's transaction is confirmed, if Bob is in agreement, he submits a transaction to the smart contract to signal his agreement. The smart contract then returns Alice's and Bob's values as well as \guards' collateral. Care must be taken to check that the sum of Alice's value and Bob's value at the closing state does not exceed the sum of their values at their initial state, so that sufficient funds remain to return the \guards' collateral. If Bob does not agree with Alice's claim, the channel becomes unusable and must be closed pessimistically (Alice can no longer make an optimistic claim on a different state).  The optimistic close operation measures the cumulative gas cost of the two transactions from both parties. The cost is minimal and should be the normal path since parties need not pay closing fees.

Finally, the channel can be closed pessimistically at any time. To do this, the party who wishes to close the channel requests this from the \guards. Each \guard then submits a transaction to the smart contract containing the sequence number they have seen last, together with Alice's and Bob's off-chain signature on it. These can be submitted in any order. The signatures of Alice and Bob on the sequence number are verified on-chain; this incurs the majority of gas cost for the pessimistic close. The party who wishes to close the channel monitors the chain for such claims and remembers any of them that are fraudulent. As soon as $t$ (honest or adversarial) claims have been recorded, either Alice or Bob can send a transaction to the smart contract to close the channel. The transaction is accompanied by the fraud proofs the closing party was able to assemble, namely the latest announcement for each \guard who made a claim on an earlier sequence number. These announcements contain the signature of the \guard on the plaintext which consists of the smart contract address and the sequence number. As smart contracts have unique addresses, including the smart contract address in the plaintext ensures that the same \guard can participate in multiple \sys channels simultaneously using the same public key\footnote{If \sys is deployed on multiple alt-etherea, each \guard must use a different key in each, or the smart contract must be modified to have the \guard sign the \texttt{CHAIN\_ID} together with the address and sequence number.}. Closing the channel releases the funds of the participant parties and slashes any malicious \guards. After the channel has been closed, any honest \guards who wish to redeem their collateral and their corresponding fee can do so by issuing a further transaction to the smart contract.
The pessimistic close operation was measured when no fraud proofs are provided and includes the transaction of each of the $t$ \guards and the final transaction by one of the parties. Additionally, it was assumed that, while the counterparty is unresponsive or malicious, the \guards were responsive and all submitted the same sequence number (hence limiting the need for multiple signature validations). 

\begin{figure}[t]
    \vspace{-0.3cm}
    \centering
    \includegraphics[width=\columnwidth]{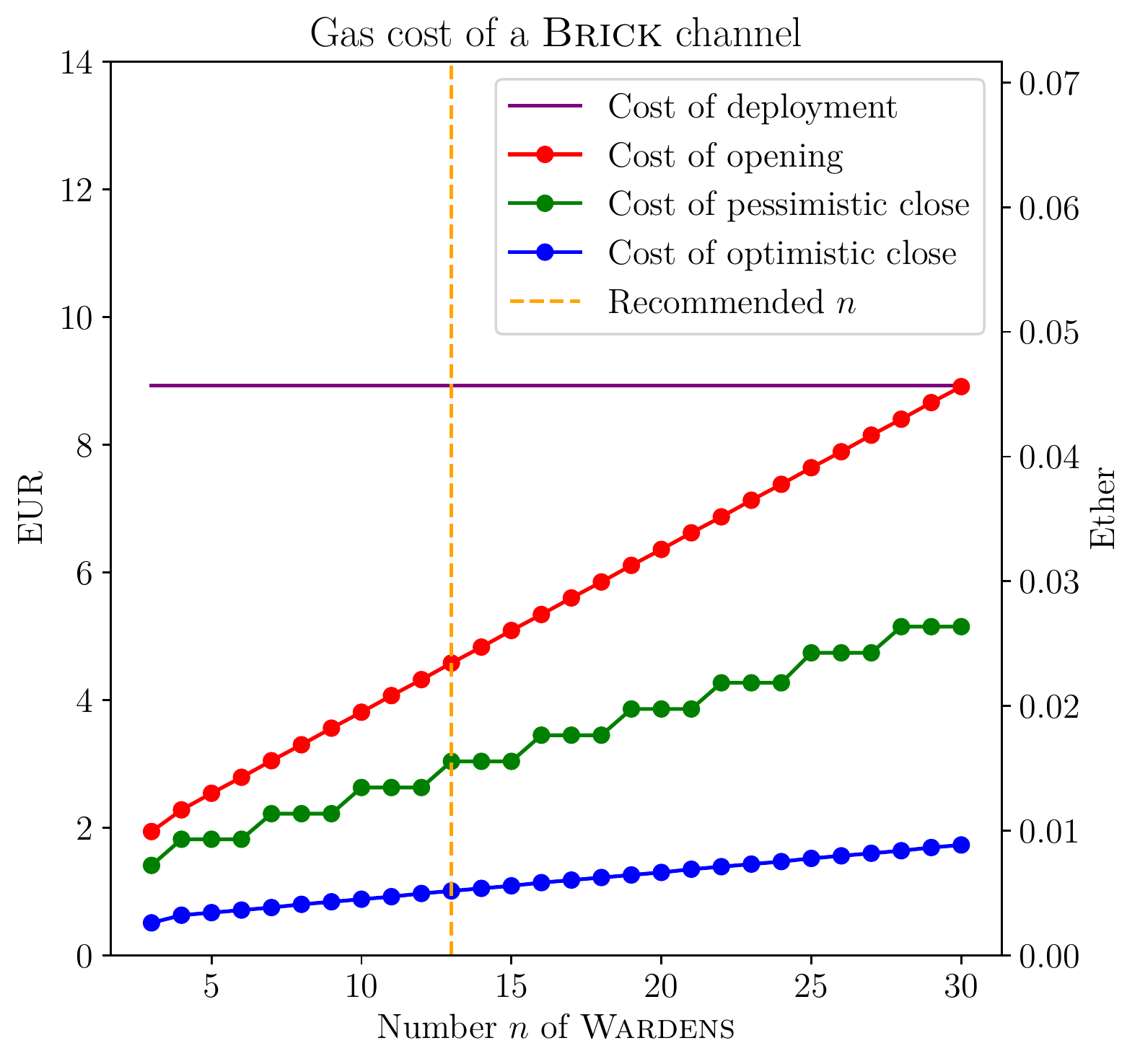}
    \vspace{-0.2cm}
    \caption{The on-chain gas cost of deploying and operating \sys in the form of an Ethereum smart contract.}
    \label{fig:gas}
    \vspace{-0.7cm}
\end{figure}

\paragraph{Consistent Broadcast}
We have also implemented consistent broadcast in Golang using the Kyber~\cite{kyber} cryptographic library and the cothority~\cite{cothority} framework. In Table~\ref{tab:eval}
we evaluate our protocol on Deterlab~\cite{deterlab} using 36 physical machines, each having four Intel E5-2420 v2 CPUs and 24 GB RAM.
To have a realistic wide area network, we impose a 100ms roundtrip latency on the links between \guards and a 35Mbps bandwidth limit.

As illustrated the overhead of using a committee is almost equal to a round-trip latency ($100$ms).
The small overhead is due to the party sending the messages in sequence, hence the last message is sent with a small delay $d>0$. This is observed in the total latency which is close to $100+d$ ms.
This latency in \sys defines the time parties must wait to execute a transaction safely, meaning that \sys provides fast finality. 
These numbers are three orders of magnitude faster than current blockchains, such as Ethereum where blocks are generated on average every 12sec and finality is guaranteed after 6 minutes.
Furthermore, channels are independent and embarrassingly parallel which means that we can deploy as many as we want without significantly increasing the overhead.
In contrast to synchronous channel solutions where finality under our model is guaranteed only after channel closure, \sys provides fast finality (rtt).

\begin{table}[ht]
    \centering
        \caption{Microbenchmark of \sys}

    \begin{tabular}{l|lccc}
        \toprule
                     \textbf{Number of \guards }              & \textbf{7} & \textbf{34} & \textbf{151}\\
        \midrule
                   \textbf{Consistent Broadcast }         &0.1138 sec    &0.118 sec & 0.1338 sec  \\
                   \com{  \textbf{Broadcast}         & &  &  &   \\}
        \bottomrule
    \end{tabular}
    \label{tab:eval}
\end{table}
\com{
As for deployment over existing blockchains, \sys exposes a classic multi-signature wallet~\cite{appleton17ethereum} abstraction.
The funding transaction transfers atomically the collateral of the committee and the state of the parties under the management of the wallet. The closing transaction commits the final state by transferring the funds of the wallet back to its rightful owners. In order to respect the rules of \sys the policy for transferring funds out of the wallet needs to say that either all public-keys of the parties sign (only for \sys) or a combination of $t$-out-of-$n$ \guards together with one party sign. This is a straightforward implementation of a multi-signature wallet on Ethereum. Unfortunately,  it can become rather costly on Bitcoin because of the linear (to the number of parties) possibilities we need to define for the closing transaction. Advancements such as MAST~\cite{rubin14merkelized} can lift this cost if implemented on Bitcoin.
However, we believe that implementing the appropriate incentive mechanisms in Bitcoin might be quite challenging if possible at all. 
}

\section{Discussion, Limitations, and Extensions}\label{sec:limit}
This section briefly outlines several of \sys's important remaining limitations, possible extensions, and discussion on design choices.

\subsubsection*{Byzantine players}
If both channel parties are byzantine then the \guards' collateral can be locked arbitrarily long since the parties can simply crash forever.
This is why in the threat model, we assume that at least the richest channel party is rational to correctly align the incentives.
The second threshold assumption of \sys is that at most $f$ out of the $3f+1$ \guards are byzantine. This threshold is necessary to maintain safety and is in accordance to well known lower bounds for asynchronous consistent broadcast.

\subsubsection*{\guard Unilateral Exit}
As mentioned above, if both parties become malicious they might hold the \guards' collateral hostage. A similar situation is indistinguishable from the parties not transacting often. As a result the \guards might want to exit the channel. A potential extension can support this in two ways.

The first functionality we can incorporate in \sys, is committee replacement. In particular, we can allow a \guard to withdraw its service as long as there is another \guard willing to take its place. In this case, we simply replace the collateral and \guard identities with an update of the funding transaction on-chain, paid by the  \guard that requests to withdraw its service.

Second, if a significant number (e.g, $2f+1$) of \guards declare they want to exit, the smart-contract can release them and convert the channel to a synchronous channel~\cite{mccorry2019pisa}. The parties will now be able to close the channel unilaterally by directly publishing the last valid state.
If the counterparty tries to cheat and publishes an old state, the party (or any remaining watchtower) can catch the dispute on-time and additionally claim the (substantial) closing fee. 

\subsubsection*{Committee Selection}  
Each channel has its own group of \guards, \ie, the committee is independently selected for each channel by the channel parties.
The scalability of the system is not affected by the use of a committee since each channel has its own independent committee of \guards.
The size of the committee for each channel can vary but is constrained by the threat model. If we assume at least one honest party in the channel, a single rational \guard is enough to guarantee the correct operation of \sys. Otherwise, we require more than $7$ \guards to avoid hostage situations from colluding channel parties (see Section~\ref{sec:incentives}). 

The main purpose of introducing a \guard committee is to increase the fault tolerance of the protocol.
The cost of maintaining the channel security for the parties is not dependent on the committee size, but on the value of the channel. If the parties chose a small committee size, the collateral per \guard is high, thus the update fees are few but high. On the other hand, if the parties employ many \guards, the collateral per \guard is low, thus the update fees are many but low. In summary, the cost of security is proportional to the value of the channel and independent of the size of the committee.

\subsubsection*{Consensus vs Consistent Broadcast}
Employing consistent broadcast in a blockchain system typically implies no conflict resolution as there is no liveness guarantee if the sender equivocates. This is not an issue in channels since a valid update needs to be signed by both parties and we provide safety guarantees only to honest and rational parties\footnote{Of course if a party crashes we cannot provide liveness, but safety holds.}.

The state updates in channels are totally ordered by the parties and each sequence number should have a unique corresponding state.
Thereby, it in not the role of the \guard committee to enforce agreement, but merely to verify that agreement was reached, and act as a shared memory for the parties. For this reason, state machine replication (or consensus) protocols are not necessary. Consistent broadcast is tailored for \sys as it offers the only necessary property, equivocation protection.

\subsubsection*{\sys Security under Execution Fork Attacks}
We can extend \sys to run asynchronous consensus~\cite{kokoris19bootstrapping}
during the closing phase in order to defend against execution fork attacks~\cite{karame12double}. 
This would add an one-off overhead during close but would make 
\sys resilient against extreme conditions~\cite{avarikioti2019bitcoin}. 
For example, in case of temporary dishonest majority 
the adversary can attack the persistence\footnote{Persistence states that once a transaction is included in the permanent part of one honest party's chain, then it will be included in every honest party’s blockchain, \ie, the transaction cannot change.} of the underlying blockchain, meaning that the adversary can double-spend funds. The same holds for all channel constructions so far; if the adversary can violate persistence,
the dispute resolution can be reversed, hence funds can be cheated out of a channel party. 

For \sys there is limited protection as the adversary can only close on the last committed state or the freshest valid (but not committed) state. With consensus during close, \sys maintains safety (\ie, no party loses channel funds) even when persistence is violated. A malicious party can only close the channel in the state that the consensus decides to be last, thus a temporary take-over can only affect the channel's liveness. Therefore, \sys can protect both against \textit{liveness and persistence attacks}\footnote{We assume the channel to be created long before these attacks take place, so the adversary cannot fork the transaction that creates the channel.} on the underlying blockchain adding an extra layer of protection, and making it safer to transact on \sys than on the underlying blockchain.

\subsubsection*{Update Fees}
Currently in \sys, the \guards (committee) get rewarded on every state update via a one-way channel. Ideally, these rewards would be included in the state update of the channel. However, even if we modify the update of the state to increase the fees on every update, the parties can always invoke the \texttt{Optimistic Close} protocol, and update the channel state to their favor when closing. Therefore, the incentives mechanism is not robust if the update rewards of the \guards are included in the update states. 

\subsubsection*{Collateral}
The collateral for each \guard is $v/f$, where $v$ is the total value of the channel and $f$ the number of byzantine \guards. This is slightly higher than $v/(f+1)$ that is the lowest amount for which security against bribing attacks is guaranteed when both channel parties and \guards are rational under asynchrony.
Towards a contradiction, we can simply consider a channel where \guards lock  a total collateral $c<v$. Then any rational party that has less than $v-c-\epsilon$ value in the freshest state will bribe the \guards $c+\epsilon$ coins and close the channel in a previous state where the party holds the total value $v$ of the channel.
In a synchronous network, this attack would not work since the other parties would have enough time to dispute. However, under asynchrony (or simple offline parties assumption~\cite{mccorry2019pisa}) there is no such guarantee.
This requirement highlights a trade-off for replacing trust: online participation with synchrony requirements or appropriate incentive mechanisms to compel the honest behavior of rational players.


\subsubsection*{\asys Channel Close} 
In \sys, the parties can close the channel in agreement at any time without the need for committee intervention (\texttt{Optimistic Close}). However, \asys does not allow this function; a channel can only close in collaboration with the committee. Otherwise, the channel parties can update the channel and alter the history without accountability. In case of malicious behavior of the committee (which is outside the threat model of this work), we assume that the parties can alert the auditor who can pursue external punishment to enforce the \guards to be responsive.

\subsubsection*{Decentralization} 
In previous payment channel solutions a party only hires a watchtower if it can count on it in case of an attack. 
In other words, watchtowers in older protocols are the equivalent of insurance companies. If the attack succeeds, the watchtower should reimburse the cheated channel party~\cite{mccorry2019pisa}. After all, it is the watchtower's fault for not checking the blockchain when needed. However, in light of network attacks (which are prevalent in blockchains~\cite{apostolaki16hijacking,gervais15tampering}), only a few, centrally connected miners will be willing to take this risk. 
\sys provides an alternative, that proactively protects from such attacks and we expect to provide better decentralization properties with minimal overhead and fast finality.
\vspace{-0.1cm}
\section{Related Work}
\vspace{-0.1cm}
\begin{table*}[ht]
    \centering
       \vspace{-0.3cm}
        \caption{Comparison with previous work}
        \vspace{-0.3cm}

    \begin{tabular}{l|ccccc}
        \toprule
                     \textbf{Safe under}              & 
                     \textbf{Lightning~\cite{poon2015lightning}} &
                     \textbf{ Monitors~\cite{dryja2016monitors}} &  \textbf{Pisa~\cite{mccorry2019pisa}} &
                     \textbf{Cerberus~\cite{avarikioti2020cerberus}} &
                     \textbf{Brick}\\
        \midrule
                \textbf{Asynchronous Network}     &\xmark & \xmark & \xmark & \xmark & \cmark  \\

        \hline
                \textbf{Offline Parties}     &\xmark & \cmark & $\bm{T \gg t_d}$\footnote{ The party needs to be able to deliver messages and punish the watchtower within a large synchrony bound T.}\label{fn:time} & $\bm{T \gg t_d}$\textsuperscript{18}
                & \cmark \\  
        \hline
                \textbf{Rational Players}         &\xmark & \xmark & $\bm{\sim}$\footnote{ The watchtower needs to lock collateral per-channel, equal to the channel's value. Current implementation of Pisa does not provide this.} & \cmark & \cmark   \\ 
        \hline
                \textbf{Censorship}     &\xmark & \xmark & \xmark & \xmark & \cmark   \\
        \hline
                \textbf{Congestion}     &\xmark & \xmark & \xmark & \xmark & \cmark  \\
        \hline
               \textbf{Forks}     &\xmark & \xmark & \xmark & \xmark & \cmark\footnote{ Possible if consensus is run for closing the channel as described  in Section~\ref{sec:limit}.}  \\
        \hline
                \textbf{Bitcoin Compatible}     &\cmark & \cmark & \xmark & \cmark & \xmark  \\
                
        \bottomrule
    \end{tabular}
            \vspace{-0.3cm}
    \label{tab:comparison}
\end{table*}

In this work we presented \sys, the first incentive-compatible payment channel that provides safety even when parts of the network are disconnected.
We exhibit the differences of \sys to other channel constructions, such as Lightning, as well as other watchtower solutions, such as Monitors, Pisa and Cerberus in Table \ref{tab:comparison}. Specifically, we observe that \sys is the only Layer 2 solution that maintains security under an asynchronous network and offline channel parties while assuming rational \guards. Further, \sys is secure even when the blockchain substrate is censored, and also when the network is congested. Finally, an extension to \sys that we describe  in Section~\ref{sec:limit} enables protection against small scale persistence attacks making it more secure than the underlying blockchain.

Payment channels were originally introduced by Spilman~\cite{spilman2013channels}, as a unidirectional  off-chain solution, meaning that the sender can send to the receiver incremental payments off-chain via the channel, as long as the sender has enough capital on the channel. Later, bidirectional channels were introduced independently by Poon et al.\ with Lightning Network~\cite{poon2015lightning}, and Decker and Wattenhofer with Duplex Micropayment Channels~\cite{DW2015channels}. Bidirectional channels allowed the  capital locked in the channel to be moved in both directions from sender to receiver and back, like in a row of an abacus. All these solutions though require timelocks to guarantee safety and thus make strong synchrony assumptions that sometimes fail in practice.
\sys, on the other hand, is the first bidirectional channel solution that does not require timelocks and operates securely under full asynchrony.


Another safety requirement of the original payment channel proposals is that the participants of a channel are obligated to be frequently online, actively watching the blockchain.
To address this issue, recent proposals introduced third-parties in the channel to act as proxies for the participants of the channel in case of fraud. 
This idea was initially discussed by Dryja~\cite{dryja2016monitors}, who introduced third-parties on Lightning channels, known as Monitors or Watchtowers~\cite{watchtowers}.
Later, Avarikioti et al.\ proposed DCWC~\cite{avarikioti2018towards}, a less centralized distributed protocol for the watchtower service, where every full node can act as a watchtower for multiple channels depending on the network topology. 
In both these works, the watchtowers are paid upon fraud. Hence, the solutions are not incentive-compatible since in case a watchtower is employed no rational party will commit fraud on the channel and thus the watchtowers will never be paid. This means there will be no third parties offering such a service unless we assume they are altruistic.

In parallel, McCorry et al.~\cite{mccorry2019pisa} proposed Pisa, a protocol that enables the delegation of  Sprites~\cite{Miller2017sprites} channels' safety to third-parties called Custodians. Although Pisa proposes both rewards and penalties similarly to \sys, it fails to secure the channels against bribing attacks. Particularly, the watchtower's collateral can be double-spent since it is not tied to the channel/party that employed the watchtower. More importantly, similarly to Watchtowers and DCWC, Pisa demands a synchronous network and a perfect blockchain, meaning that transactions must not be censored, to guarantee the safety of channels. 
On the contrary, \sys is provably secure under rational parties and watchtowers and operates under full asynchrony.
Similarly to Watchtowers and Pisa, \sys also maintains privacy from external to the channel parties, \ie, all the off-chain update states are visible only to the parties of the channel (not to the \guards).

Concurrently to this work, Avarikioti et al.\ introduced Cerberus channels~\cite{avarikioti2020cerberus}, a modification of Lightning channels that incorporates rational watchtowers to Bitcoin channels. Although Cerberus channels are incentive-compatible, they still require timelocks. Thus, in contrast to \sys, their security depends on the synchrony assumptions and a perfect blockchain that cannot be censored. Furthermore, Cerberus channels do not guarantee privacy from the watchtowers, as opposed to \sys and \asys, where unauthorized external parties have no information on the state of the channel.

\com{
Towards a different direction, Leinweber et al~\cite{leinweber2019tee} recently proposed TEE-based distributed watchtowers to reduce the  storage costs for watchtowers in Bitcoin. Simultaneously, Khabbazian et al.\ presented Outpost~\cite{khabbazian2019outpost}, a lightweight watchtower design for Bitcoin in which watchtowers can extract the secrets to revoke a fraudulent commitment transaction directly from the commitment transaction that appeared on-chain. Both these works aim to reduce the storage cost for watchtowers in Bitcoin channels. In \sys, the storage cost is constant since the \guards only store the hash of the last state. 

Finally, Celer Network
proposed the State Guardian Network, a side chain that safeguards the off-chain transactions for a specific period of time when requested by a channel party. However, the State Guardian Network assumes a reputation system to operate and thus is not suitable for permissionless blockchain systems. 
On the other hand, \sys makes no such assumptions and can be deployed in any permissionless blockchain that supports smart contracts.
Furthermore, contrary to any side chain solution, the \guards in \sys
do not need to verify via consensus the validity of channel update. Instead, the validity of each state solely depends on the agreement between the channel parties, hence the existence of all signatures on the hash of the update state.}

\subsection*{State Channels, Payment Networks, and Sidechains}

Payment channels are specifically-tailored solutions that support only payments between users.
To extend this solution to handle smart contracts~\cite{szabo1997formalizing} that allow arbitrary operations and computations, \textit{state channels} were introduced~\cite{Miller2017sprites}.
Recently, multiple state channel constructions have emerged, but they mainly focus on the routing problem of channel networks. 
In particular, Sprites~\cite{Miller2017sprites} improve on the worst-case delay for releasing the collateral of the intermediate nodes on the payment network for multi-hop payments. 
In parallel, Perun~\cite{dziembowski2017perun,dziembowski2019multi} proposes a virtual payment hub, where every party can connect to and hence establish a ``virtual channel'' with any other party connected to the hub. 
In a similar manner, Counterfactual~\cite{coleman2018counterfactual} presents ``meta-channels'', which are generalized state channels (not application-specific) with the same functionality as ``virtual channels''.
However, all these constructions use the same foundations, \ie, the same concept on the operation of two-party channels.
And as the fundamental channel solutions are flawed the whole construction inherits the same problems (synchrony and availability assumptions). 
\sys's design could potentially extend to an asynchronous state channel solution if there existed a valuation function for the states of the contract ( \ie, a mapping of each state to a monetary value for the parties) to correctly align incentives.
In this case, the channel can evolve as long as the parties update the state, while in case of of an uncooperative counterparty the honest party can always pessimistically close the channel at the last agreed state and continue execution on-chain.

Another solution for scaling blockchains is sidechains~\cite{back2014enabling,pow-sidechains,pos-sidechains}. In this solution, the workload of the blockchain (main chain) is transferred in other chains, the sidechains, which are ``pegged'' to the main chain. Although the solution is kindred to channels, it differs significantly in one aspect: in channels,
the states updates are totally ordered and unanimously agreed by the parties thus a consensus process is not necessary.
On the contrary, sidechains must operate a consensus process to agree on the validity of a state update.
\sys lies in the intersection of the two concepts; the states are totally ordered and agreed by the parties, whereas \guards merely remember that agreement was reached at the last state announced.

Finally, an extension to payment channels is payment channel networks (PCN)~\cite{bagaria2019boomerang,moreno2017silentwhispers,roos2017settling,prihodko2016flare}.
The core idea of PCN is that users that do not have a direct channel can route payments using the channels of other users. 
While \sys presents a novel channel construction that is safe under asynchrony, enabling asynchronous multi-hop payments remains an open question.

\section{Acknowledgements}

We would like to thank Jakub Sliwinski for his valuable discussion and feedback as well as his impactful contribution to this work.


%

\bibliographystyle{IEEEtranS}
\bibliography{net,sec,os,references}

\appendix

\subsection{Attack against existing channel constructions}\label{app:attack}

Here we present a concrete attack against Pisa~\cite{mccorry2019pisa}. 
This attack applies to any channel construction that has a time-out dispute process.
The attack is simple and relies on delaying the appearance of the dispute transaction on-chain until the dispute window expires. 
The attack is as follows: 

\begin{enumerate}
    \item Alice and Bob establish a state-channel. As Pisa constructs, both Alice and Bob chose their favorite watchtower and provably assign to them the job of watching the chain.
    \item Alice sends multiple coins to Bob through state changes $S_1,S_2,...,S_n$. 
    \item Alice request the closure of the channel with state $S_1$ where Alice holds more coins that $S_n$.
    \item Bob or Bob's watchtower detects the attack and sends on the network a dispute transaction for $S_n$.
    \item Alice attacks Bob in one of the following ways:
    \begin{enumerate}
        \item Congests the blockchain so that the dispute transaction for $S_n$ does not appear on-chain until after the dispute window expires.
        \item Depending on the blockchain, Alice can censor the dispute transaction for $S_n$ without congesting the network~\cite{miller2013feather}.
        \item If Alice is a classic asynchronous network adversary~\cite{bracha87asynchronous}, Alice can delay the delivery of the dispute transaction for $S_n$ until after the dispute window.
    \end{enumerate}
    \item Since the dispute transaction for $S_n$ appears after the dispute window expires, the safety of the channels is violated. 
\end{enumerate}

In Pisa this attack might cause the watchtower to be punished although the watchtower did not misbehave. In other channels~\cite{Miller2017sprites,watchtowers}, that do not employ or do not collateralize watchtowers, this attack leads to  direct value loss for Bob. 
These attacks do not apply in \sys and \asys since the adversary can only temporarily thwart liveness of the \textit{close} function.

\subsection{Incentivizing Rational Parties}
\label{sec:rationalparties}

In this section, we show that the security of \sys holds as long as the richest party of the channel is rational -  the other party can be byzantine. We argue for every phase separately.

\subsubsection*{Open}
Naturally, if a party is not incentivized to open a  channel then that channel will never be opened. We assume the parties have some business interest to use the blockchain and since transacting on channels is faster and cheaper they will prefer it. Deviating from the protocol at this phase is meaningless.

\subsubsection*{Update}
During the execution of Protocol \ref{alg:update}, any party can deviate from the protocol by not signing the hash of the new state. In this case, the new state will not be valid and thus cannot be committed and will not be executed. No party can increase its profit from such behavior directly (attacking the safety of the channel). 
The same argument holds in case the party signs the hash but not the sequence number.
Moreover, attempting to attack the liveness of the channel is not profitable since the counterparty can always request to close in collaboration with the committee by invoking Protocol \ref{alg:pesclose}: \texttt{Pessimistic Close}. 

Lastly, channel parties can collude and stop updating the channel (liveness attack) in order to enforce a hostage situation on the \guards' collateral. However, the committee size is at least $n > 7$. Thereby, from the pigeonhole principle there is at least one party in the channel that has locked funds at least equal to $\frac{v}{2} > \frac{v}{f}$, \ie, the richest party of the channel. Thus, the richest channel party locks an amount larger than each \guard's collateral which means that this party's cost is larger than a \guards. Since we assume the richest party is rational, at any time this party is incentivized to close the channel in collaboration with the committee. 
Thus, the richest party (if rational) will not deviate from the honest execution of Protocol~\ref{alg:update}.

During the execution of \texttt{Consistent Broadcast}, a party can deviate in the following ways:
\begin{itemize}
    \item First, a party can choose not to broadcast the announcement to the committee or part of the committee. 
    In this case, the party has signed the new state, which is now a valid state. This state will be considered committed for the counterparty after the execution of Protocol~\ref{alg:broadcast}.
    We show a rational party cannot increase its payoff by not broadcasting the announcement to all \guards. 
    To demonstrate this, we consider two cases; either the new state is beneficial to the party or not.
    If the new state is beneficial to party $A$, then this state is not beneficial for the counterparty (\eg, $B$ payed $A$ for a service). Thus, if the committee has not received the freshest state, party $B$ can ``rightfully'' close the channel in the previous state. Hence, the expected payoff of party $A$ decreases.
    On the other hand, suppose the new state is not beneficial to party $A$ and it chooses not to send the announcement to the committee. 
    Then, either the counterparty will have the state committed or the state will not be executed.
    From the safety property of the channels, party $A$ cannot successfully close the channel in a previous state if the state was committed. Hence, party $A$ does not increase its payoff.
    On the other hand, if party $A$ does not request the signature of a \guard that will later commit fraud, then the party cannot construct a proof-of-fraud to claim the \guard's collateral and therefore the party's payoff decreases.
    \item Second, a party can broadcast different messages to the committee or parts of the committee. 
    During the execution of Protocol \ref{alg:broadcast}, the \guards verify the parties' signatures, thus an invalid message will not be acknowledged from an honest \guard.
    If the messages are valid (both parties' signatures are present), the parties have misbehaved in collaboration. This can lead to a permanent partition of the view of the committee regarding the state history, but at most one of the states can be committed (get the $2f+1$ signatures).
    Thus, this strategy has the same caveats as the previous one, where the party can only lose from following it.
    \item Lastly, the party can choose not to proceed to the state transition. This is outside the scope of the paper and a problem of a different nature (a fair exchange problem). 
\end{itemize}
Overall, a rational party cannot increase its payoff by deviating from Protocol~\ref{alg:update} or Protocol~\ref{alg:broadcast}.

\subsubsection*{Close}
In this phase, there are two different options:  \texttt{Optimistic Close} and \texttt{Pessimistic Close}.\\
In the first case, where Protocol~\ref{alg:optclose} is executed, a party can deviate from it in the following ways: 
\begin{itemize}
    \item It is the party requesting the closing of the channel in a cheating state. The counterparty will not sign the state since it is being cheated, else it would not be a cheating state. Thus, safety is guaranteed and the party cannot profit from this strategy. 
    \item It is the party requesting the closing of the channel and never publishes the signed closing state. In line 2 of Protocol \ref{alg:optclose}, the signatures on the state are exchanged between the parties, hence the counterparty will eventually publish the closing state. Note that we assume that the closing party sends its signature first with the closing request.
    \item It is the party that got the closing request and does not sign the state. In this case, the party requesting to close the channel can invoke the \texttt{Pessimistic Close} protocol and close the channel in collaboration with the committee in the freshest committed state.
\end{itemize}
Thus, any potential deviation from Protocol \ref{alg:optclose} cannot increase the payoff of a party executing the protocol.\\
In the second case, where Protocol \ref{alg:pesclose} is executed, a party can deviate from it in the following ways:
\begin{itemize}
    \item The party that requested to close is not responsive, meaning that the party does not publish the closing state. In this case, either the party is the richest of the channel or not. If the party is the richest of the two then the party's cost of not responding is higher than that of the counterparty and any other \guard's cost. Therefore, the party is not the richest of the channel. In this case, the counterparty which wants to close the channel, can use the on-chain closing announcements of the \guards and publish the closing state. 
    In both cases, the party that requested close cannot increase its payoff by not revealing the closing state, but only lose from locking its channel funds for a longer period of time that necessary (since no updates are possible).
    \item The party that requested to close the channel publishes an invalid closing state (\eg, random state, previously valid state, a committed state that is not the freshest). However, the smart contract will not close the channel in this state because some verification will fail. Therefore, the party will simply lose the blockchain fee for the on-chain transaction and will not gain any profit. Note that for a rational party to successfully commit fraud, at least one rational \ref{sec:rationalwatchtowers}.
\end{itemize}

To summarize, if the richest channel party deviates from the honest protocol execution at any phase, it cannot increase its payoff. Therefore, the richest rational party will follow the protocol, hence honest behavior for the richest  party is the dominant strategy.

\subsection{\asys Security Analysis}\label{app:plus}

\begin{figure*}[t]
\vspace{-0.3cm}
    \centering
    \includegraphics[width=0.8\textwidth]{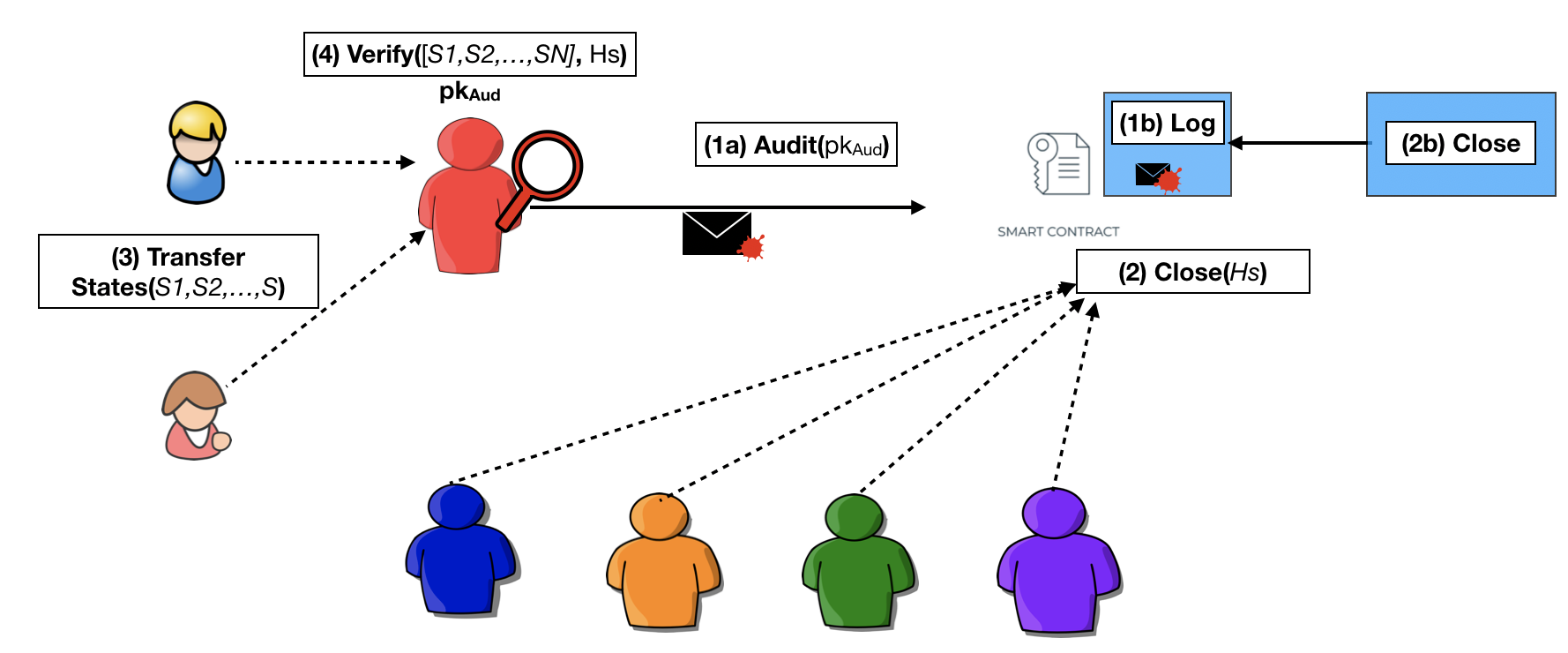}
    \vspace{-0.1cm}
    \caption{ Typical workflow of \asys for an audit update. (1) The auditor starts the audit by posting the request on chain, (2) the committee closes the channel, and (3) the parties transfer the state to the auditor.
    Finally, (4) the auditor cross-checks the claims of the party and the committee.}
    \label{fig:audit}
\end{figure*}


\begin{protocol}
\DontPrintSemicolon
\SetAlgoLined
\textit{Input:} Channel parties $A,B$, current state $s$, previous head of hashchain $H_p$. \;
\textit{Goal:} Create announcement $\{M, \sigma(M), H_s\}$ (sequence number of new state and head of hash chain). \; 
\vspace{0.25cm}
1. Both parties $A,B$ sign and exchange: $\{H_s=H(H_p, H(s_i,r_i), i)\}$, where $r_i$ is a random number and $s_i$ the current state.\;
2. After receiving the signature of the counterparty on $H_s$, the party uses it to create the announcement $\{M, \sigma(M), H_s\} (M=i)$.
\caption{\texttt{Update for \asys}}
\label{alg:plusupdate}
\end{protocol}

In this section, we first prove the \asys goals, namely \textit{Safety, Liveness, Privacy} and \textit{Auditability} assuming $2f+1$ honest \guards, and later we argue that rational \guards will not deviate from the protocol. 

Throughout this section, we assume the richest channel party is rational, thus it will not deviate from honest behavior if it will be discovered and punished. 
Furthermore, we assume the auditor is also rational, meaning that the auditor will only deviate from the protocol to gain more profit (hence the need for a smart contract to do the fair exchange between the auditor and the committee). 
However, the auditor will not punish a party arbitrarily with no proof, since the auditor is supposed to be an external trusted authority (\eg,~judge, regulator, tax office etc.).
Lastly, we assume the threat model of Section~\ref{sec:model} regarding \guards.

\begin{theorem}
{\emph \asys} achieves safety under asynchrony in our system model. 
 \end{theorem}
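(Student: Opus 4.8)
The plan is to reduce the safety of \asys to the safety of \sys, which was already established in Theorem~\ref{thm:safety}. The key observation is that \asys differs from \sys only in two ways that are relevant to closing: (i) \texttt{Optimistic Close} (Protocol~\ref{alg:optclose}) is disabled, so the \emph{only} way to close an \asys channel is \texttt{Pessimistic Close} (Protocol~\ref{alg:pesclose}) in collaboration with the committee; and (ii) the \guards store the head $H_s$ of a hash-chain of states (Protocol~\ref{alg:plusupdate}) instead of a bare sequence number. Since the hash-chain announcement $\{M,\sigma(M),H_s\}$ still carries a monotonically increasing sequence number $M=i$ signed by both parties, and the \guards still only ever accept an update whose sequence number is exactly one higher than the one they currently store, the entire sequence-number argument used in Theorem~\ref{thm:safety} goes through verbatim.

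First I would recall the definitions of \textbf{valid}, \textbf{committed}, and \textbf{Safety} and note that in \asys a state is committed iff it is signed by at least $2f+1$ \guards (the on-chain alternative from the committed definition still applies once the channel is closed). Then I would argue by contradiction exactly as in Theorem~\ref{thm:safety}: suppose the channel closes in state $s_i$ while there exists a committed state $s_k$ with $k>i$. Closing in $s_i$ via \texttt{Pessimistic Close} requires $t=2f+1$ on-chain closing announcements whose maximum sequence number is $i$, so at least $f+1$ of these come from honest \guards, none of whom ever signed any announcement with sequence number $>i$ (an honest \guard stops signing after publishing a closing announcement, and only signs strictly increasing sequence numbers). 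Hence at most $3f+1-(f+1)=2f$ \guards could have signed $s_k$, so $s_k$ is not committed — contradiction. I would then separately dispose of the case where $s_i$ is valid but not yet committed, observing as in the \sys proof that such a state eventually becomes the freshest committed state, so no honest-party funds are lost.

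The only genuinely new ingredient, which I would address next, is the hash-chain: I must check that replacing the stored sequence number with $H_s$ does not create a new attack surface for closing. The point to make is that the smart contract in \texttt{Pessimistic Close} still recovers $\{H(s_i,r_i),i\}$ from the submitted state and salt, verifies both parties' signatures on it, and verifies that $i$ is the maximum among the submitted announcements; the hash-chain linkage $H_s=H(H_p,H(s_i,r_i),i)$ is additional structure that the auditor uses but that does not weaken any of the checks the contract performs at close time. So the set of states in which the channel can legitimately close is a subset of what it was in \sys (strictly smaller, since \texttt{Optimistic Close} is gone), and safety can only be preserved or strengthened.

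I expect the main obstacle to be a subtle one rather than a deep one: making sure that disabling \texttt{Optimistic Close} does not break the ``valid but not committed eventually becomes committed'' step, since in \sys that step implicitly leaned on \texttt{Optimistic Close} being available as a fallback path to finalize a valid state on-chain. In \asys I would instead appeal to the fact that the closing party, having obtained $t$ closing announcements, itself submits the freshest state it holds (which is valid and carries the maximum sequence number) to the contract; combined with Liveness of the underlying blockchain this state is eventually committed on-chain, so the argument still closes. The rest is a routine transcription of Theorem~\ref{thm:safety}.
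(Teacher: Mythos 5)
Your proposal is correct and takes essentially the same route as the paper: the paper's entire proof is the one-line observation that safety follows from Theorem~\ref{thm:safety} because the modifications in \asys do not affect the safety property. You simply supply in detail the verification the paper leaves implicit --- re-running the quorum-intersection argument and checking that the hash-chain announcement and the removal of \texttt{Optimistic Close} do not weaken any close-time check.
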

 \begin{proof}
 It follows from Theorem~\ref{thm:safety}, since all modifications on \sys do not affect the safety property.
 \end{proof}
 
 \begin{theorem}
 {\emph \asys} achieves liveness under asynchrony in our system model.
 \end{theorem}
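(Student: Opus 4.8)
The plan is to prove \textbf{Liveness} for \asys by reduction to the liveness argument already established for \sys in Theorem~\ref{thm:liveness}, carefully tracking the three structural modifications introduced by \asys: (i) \texttt{Optimistic Close} is disabled, (ii) the \texttt{Update}/\texttt{Consistent Broadcast} protocols now carry a hash-chain head $H_s$ instead of (or in addition to) the bare sequence number, and (iii) the new \texttt{Audit} functionality (Protocol~\ref{alg:audit}) can trigger a close. The claim to establish is the same as in \sys: every valid operation on the channel state is eventually committed or invalidated, under asynchrony, with at most $f$ byzantine \guards (equivalently $2f+1$ honest \guards) and a rational richest party.

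First I would handle the \textbf{update} operation. Since the only change to Protocol~\ref{alg:plusupdate} relative to Protocol~\ref{alg:update} is that the parties sign $H_s = H(H_p, H(s_i,r_i), i)$ rather than $H(s_i,r_i)$ together with $i$, and the announcement $\{M,\sigma(M),H_s\}$ still contains the monotone sequence number $M=i$, the \guards' verification in \texttt{Consistent Broadcast} is essentially unchanged: an honest \guard checks both parties' signatures and that the sequence number is exactly one higher than the stored one. Hence the identical case analysis from Theorem~\ref{thm:liveness} applies verbatim --- a valid update is received by all honest \guards (one honest party broadcasts it), so if it is never committed then some honest \guard refused to sign, which (since the signature and sequence-number checks cannot fail under a valid update and an honest counterparty) means that \guard already published a closing announcement, placing us in the case where the update is either the closing state itself (committed on-chain) or superseded by a smaller-sequence-number close (invalidated). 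I would state this as a short lemma-style paragraph: ``the hash-chain modification does not affect the liveness of updates,'' and cite Theorem~\ref{thm:liveness}.

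Next I would handle the \textbf{close} operation, which is where the real work lies. Because \texttt{Optimistic Close} is disabled, every close in \asys proceeds via \texttt{Pessimistic Close} (Protocol~\ref{alg:pesclose}), possibly as step~2 of an \texttt{Audit}. The argument splits exactly as in Theorem~\ref{thm:liveness}: a close that is never committed means a smart-contract verification failed, hence the submitted state was invalid --- done; and a close that is never invalidated means the verifications succeeded, hence it was committed. The extra ingredient for \asys is to show a close can always be \emph{driven to completion} by the richest rational party even when the counterparty is byzantine: at least $f+1$ honest \guards will publish their stored announcement on-chain upon a $close()$ request, so together with up to $f$ others the $t$-threshold of on-chain closing announcements is eventually reached (using blockchain liveness), after which the richest rational party publishes the maximal-sequence-number state with its salt and both signatures, and the contract closes. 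This is precisely the incentive argument from Appendix~\ref{sec:rationalparties} (phase \texttt{Close}, \texttt{Pessimistic}), which I would invoke rather than re-derive.

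Finally I would address the \textbf{Audit}-triggered close as a special case: when the auditor posts a valid access request on-chain, each honest \guard treats it as a $close()$ signal (Protocol~\ref{alg:audit}, step~2), so the above \texttt{Pessimistic Close} liveness argument applies and the channel closes in a committed state; an invalid access request is ignored by honest \guards, so no operation is created and there is nothing to commit or invalidate. The main obstacle I anticipate is not any deep new argument but rather making the reduction \emph{airtight} about what ``valid operation'' now means in \asys --- in particular confirming that disabling \texttt{Optimistic Close} only removes a path to commitment and never strands a valid operation (the richest rational party always has the \texttt{Pessimistic Close} fallback, which is exactly the content of Appendix~\ref{sec:rationalparties}), and that the hash-chain head, being appended rather than replacing the sequence number, leaves every \guard-side check in Theorem~\ref{thm:liveness} intact. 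With those two observations pinned down, the proof is a one-paragraph appeal to Theorem~\ref{thm:liveness} plus the \texttt{Audit} case.
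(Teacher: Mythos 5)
Your proposal is correct and follows exactly the paper's strategy: the paper proves this theorem in a single sentence by reduction to Theorem~\ref{thm:liveness}, asserting that the \asys modifications do not affect liveness. Your version simply fills in the case analysis (hash-chain head, disabled \texttt{Optimistic Close}, \texttt{Audit}-triggered close) that the paper leaves implicit, which is a faithful and more detailed rendering of the same argument.
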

  \begin{proof}
  It follows from Theorem~\ref{thm:liveness}, since all modifications on \sys do not affect the liveness property.
 \end{proof}
 
 \begin{theorem}
 {\emph \asys} achieves privacy assuming byzantine \guards and parties that want to maintain privacy.
 \end{theorem}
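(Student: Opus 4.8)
The plan is to mirror the \sys privacy argument at the end of Section~\ref{sec:security} and isolate exactly which channel-specific data leaves the two parties over the life-cycle of an \asys channel. First I would enumerate the transcript observable by the strongest privacy adversary, namely a coalition of all $n$ \guards together with arbitrary external parties: from Protocol~\ref{alg:open}, the on-chain tuple $(H(W_1),\dots,H(W_n),t,F)$, which is state-independent; from the modified \textit{Update} phase (Protocols~\ref{alg:plusupdate} and~\ref{alg:broadcast}), the sequence of announcements $\{i,\sigma(i),H_s^{(i)}\}$ with $H_s^{(i)}=H(H_s^{(i-1)},H(s_i,r_i),i)$; and, only once \textit{Close} has been invoked, the on-chain closing announcements (the same $H_s$ with a \texttt{close} flag) together with the revealed pair $(s_i,r_i)$ and the parties' signatures. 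The hypothesis ``parties that want to maintain privacy'' ensures the parties never leak $s_i$ or $r_i$ themselves, and a crucial structural point is that in \asys the committee never receives the plaintext history at all: in Protocol~\ref{alg:audit} step~3 the parties send the state history to the \emph{auditor}, not to the \guards, so a byzantine \guard's view stays confined to sequence numbers and hash-chain heads.

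Next I would argue that the hash-chain heads reveal nothing about the states. Using the standard cryptographic assumptions of Section~\ref{sec:model} (modelling $H$ so that $H(s_i,r_i)$ is a hiding commitment when applied with fresh randomness, e.g.\ in the random-oracle model), each inner value $H(s_i,r_i)$ is computationally independent of $s_i$ because $r_i$ is sampled uniformly and independently per update. By induction on $i$, $H_s^{(i)}$ is then computationally independent of $(s_1,\dots,s_i)$: a distinguisher between two equal-length state histories yielding distinct head values would, via a standard hybrid over the update index, break the hiding of the commitment on some $s_j$. Hence, before any party initiates closing, the joint view of the byzantine committee and external observers is, up to negligible advantage, simulatable from the number of updates alone, which carries no information about the distribution of funds --- exactly the leakage already tolerated for \sys.

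Finally I would handle the closing/audit path, where $(s_i,r_i)$ does appear on-chain. In \asys, \texttt{Optimistic Close} is disabled and \texttt{Pessimistic Close} (Protocol~\ref{alg:pesclose}) is triggered either directly by a channel party or, inside Protocol~\ref{alg:audit}, by the auditor's on-chain access request; in both cases the trigger is a channel party or an \emph{authorized} auditor, so this disclosure is explicitly permitted by the Privacy definition (cf.\ its footnote on authorized auditors), and the full history delivered in step~3 of Protocol~\ref{alg:audit} is sent only to that authorized auditor. Combining the three observations yields the claim: no unauthorized external party --- including a fully byzantine committee --- learns anything about the channel state unless a party (or an authorized auditor) initiates closing.

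I expect the main obstacle to be the second step: making rigorous the reduction from ``the head of the hash-chain hides the entire history'' to the hiding of a single blinded state commitment, since the chaining correlates all the heads and one must verify that reusing $H_s^{(i-1)}$ inside $H_s^{(i)}$ introduces no side channel. This is precisely where the fresh, independent $r_i$ per update is essential and where the random-oracle (or computationally-hiding-commitment) modelling of $H$ carries the argument.
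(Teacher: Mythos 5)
Your proposal follows the same overall decomposition as the paper's proof: (i) external observers and eavesdroppers learn nothing because communication channels are secure and only hashes and sequence numbers ever reach the chain or the committee, (ii) a byzantine \guard sees only $\{i,\sigma(i),H_s\}$ and cannot recover $s_i$ from it, and (iii) the disclosure at closing/audit time is explicitly permitted by the Privacy definition since it is triggered by a channel party or an authorized auditor. Where you genuinely diverge is in step (ii): the paper argues by contradiction that a \guard recovering $s_i$ from $H(s_i,r_i)$ would violate \emph{pre-image resistance} of the hash function, whereas you reduce to the \emph{hiding} property of $H(s_i,r_i)$ viewed as a commitment with fresh randomness $r_i$, and then propagate this through the chained heads $H_s^{(i)}=H(H_s^{(i-1)},H(s_i,r_i),i)$ by a hybrid argument, concluding that the committee's whole view is simulatable from the number of updates alone. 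Your formulation is the technically stronger one: pre-image resistance alone does not rule out partial information leakage about $s_i$ (it only forbids finding \emph{some} pre-image), and it is precisely the per-update randomness $r_i$ together with a hiding/random-oracle modelling of $H$ that makes the blinded commitment conceal the state; your hybrid over update indices also addresses the correlation introduced by chaining, which the paper's proof does not discuss. The paper's argument buys brevity and stays within the informally stated ``cryptographically-secure hash functions'' assumption; yours buys a simulation-style guarantee at the cost of needing to make the commitment-hiding assumption explicit. Both establish the theorem under the paper's intended assumptions, so there is no gap, only a difference in the cryptographic property the argument is anchored to.
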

  \begin{proof}
  Suppose an external party learns about the state of the channel during the protocol execution. This means that either the external party intercepted a message (between the parties of the channel or between the parties and the committee) or it is a \guard. In the first case, we assume secure communication channels thus a computationally-bounded adversary cannot get any information from the messages between the parties. In the latter case, the \guards receive during the \textit{Update} phase a message with $H_s=H(H_p, H(s_i,r_i), i)$ for any valid  state update (assuming honest parties that do not intentionally reveal the state). If the \guard extracts the state of the channel $s_i$ from the $H(s_i,r_i)$ (since it knows $H_p$), the \guard reverted the hash function. Therefore, the hash function is not pre-image resistant for a computationally-bounded adversary and hence not cryptographically-secure. This contradicts the system model, where we assumed cryptographically-secure hash functions.
  
  Furthermore, the audit request, which is the first step of the audit functionality, does not leak any information on the channel since the auditor is an external to the channel party. According to Protocol \ref{alg:audit}, the parties publish the closing state on-chain to close the channel (Protocol~\ref{alg:pesclose}). Thus, privacy is preserved since by definition it is only guaranteed until at least one of the parties initiates the closing of the channel.
 \end{proof}
 
 \begin{theorem}
 {\emph \asys} achieves auditability under asynchrony in our system model.
 \end{theorem}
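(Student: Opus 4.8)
The plan is to reduce \emph{Auditability} to three ingredients: the \textit{Audit} protocol (Protocol~\ref{alg:audit}) forces every close to go through the committee, the hash-chain stored by the \guards binds the full history of states, and collision-resistance of the hash function turns the auditor's recomputation into a sound check. Throughout I would assume, as in the rest of this section, at least $2f+1$ honest \guards, a rational richest channel party, and a rational auditor.

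First I would argue that the access request posted on-chain in step~1 of Protocol~\ref{alg:audit} drives every honest \guard to invoke Protocol~\ref{alg:pesclose}, publish its stored announcement $\{M,\sigma(M),H_s\}$, and stop signing new updates. Since there are at least $t=2f+1$ honest \guards and the underlying blockchain has liveness, the liveness argument of Theorem~\ref{thm:liveness} applies verbatim: $t$ closing announcements eventually reach the persistent part of the chain and the channel closes in the state $s_{i_{\max}}$ whose sequence number $i_{\max}$ is the maximum among the submitted announcements; by the safety of \asys (which follows from Theorem~\ref{thm:safety}) that state is committed, and its hash-chain head $H_{i_{\max}}$ is now on-chain.

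Next I would show that $H_{i_{\max}}$ commits to \emph{every} committed state. The combinatorial core is that $i_{\max}\ge k$ for each committed $s_k$: a committed state carries $2f+1$ \guard signatures, so at least $f+1$ honest \guards signed it and hence stored a hash-chain head with sequence number $\ge k$, and those heads are among the $\ge 2f+1$ heads published in step~2, so their maximum $i_{\max}$ is at least $k$ (this also shows no committed state can have index above $i_{\max}$). Unrolling the recursion $H_s=H(H_p,H(s_i,r_i),i)$ of Protocol~\ref{alg:plusupdate}, the on-chain value $H_{i_{\max}}$ is then a tamper-evident commitment to the whole prefix $s_1,\dots,s_{i_{\max}}$, which contains all committed states. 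Finally I would close the argument on the auditor's side: in steps~3--4 the parties forward the claimed history, the auditor rebuilds the hash-chain and compares its head with $H_{i_{\max}}$; a computationally-bounded party that supplied a different history hashing to the same head would exhibit a collision, contradicting the system model, so any accepted history is the true one and each committed state in it is verified. A rational richest party forwards the correct history to avoid external punishment, and if the byzantine counterparty deviates (or stays silent) the auditor detects the mismatch by the head comparison and resorts to external punishment rather than accepting a false history — this is precisely the fork-consistency~\cite{mazieres2002building} guaranteed by the hash-chain.

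The hard part will be the second step: making rigorous that a single on-chain head cannot ``hide'' a committed state with index below $i_{\max}$ and that the hash-chain is genuinely fork-consistent, which requires combining the $2f+1$ honest-\guard count, the monotone sequence numbers enforced in Protocol~\ref{alg:plusupdate}, and collision-resistance in one place. By contrast, the liveness of the on-chain rounds and the safety of the closing state are inherited directly from Theorems~\ref{thm:liveness} and~\ref{thm:safety}, so little new work is needed there.
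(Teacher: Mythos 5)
Your proof is correct and follows essentially the same route as the paper's: every committed state must appear in the hash-chain whose published head has maximal sequence number (by the $2f+1$-signature quorum-intersection argument, which is also how the paper rules out two simultaneously committed states at one index), the head binds the history by collision resistance, and the rational richest party delivers the true history under threat of external punishment. The only piece you omit is the paper's opening observation that, since the \guards and a channel party are rational, Protocol~\ref{alg:audit} cannot complete unless the access request is valid --- the step that ties the argument to an \emph{authorized} auditor, as the definition of auditability requires.
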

 \begin{proof}
 Since both the \guards and a party of the channel are rational, Protocol~\ref{alg:audit} will not complete (a valid closing state will not be published) unless the request for access is valid, hence the auditor is authorized. Thus, to prove \asys satisfies auditability, it is enough to prove that every committed state is verifiable by a third party.
 
 Towards contradiction, suppose $s$ is the earliest (least fresh) committed state that is not verifiable (since there is at least one).
 This means that state $s$ is replaced by another state $s'$ either in the history of all parties or in the hash-chain that produced the hash-head corresponding to the closing state. 
 If $s$ is not part of the hash-chain, but it is committed, then the parties misbehaved and sent to at least one \guard a different state. Note that the \guard cannot misbehave since the announcement must have the signatures of both parties. Furthermore, two different states cannot be simultaneously committed since this would require two quorums of $2f+1$ \guards that signed different state updates, thus at least $f+1$ Byzantine \guards. Contradiction to our threat model.
 Therefore, state $s$  is not part of the state history provided by the parties.
 In both cases, the auditor punishes the parties (externally).
 And since we assume one party of the channel is rational (and the external punishment exceeds the potential gain of cheating), the party will not misbehave.
 Thus, every committed state is verifiable.
 \end{proof}



\end{document}